\documentclass{amsart}

\textwidth=6.2in
\setlength{\oddsidemargin}{.2in}
\setlength{\evensidemargin}{.2in}
\setlength{\footskip}{.5in}

\title
{KP solitons and the Riemann theta functions}

\author{Yuji Kodama$^{*\dagger}$} 
\date{\today}
%\thanks{This work is partially supported by NSF grant DMS-1714770.  }
\address{$^*$ College of Mathematics and Systems Sciences, Shandong University of Science and Technology, Qingdao, 266590, China}
\address{
$^\dagger$ Department of Mathematics, The Ohio State University, Columbus, OH 43210, USA}
\email{kodama@math.ohio-state.edu}
\subjclass[2000]{}

\usepackage{url}
\usepackage{color}
\usepackage{rotating}% provides a rotate environment

\countdef\x=23
\countdef\y=24
\countdef\z=25
\countdef\t=26

\def\tbox(#1,#2)#3{
\x=#1 \y=#2 
\multiply\x by 12 
\multiply\y by 12 
\z=\x \t=\y
\advance\z by 12 
\advance\t by 12 
\psline(\x,\y)(\x,\t)(\z,\t)(\z,\y)(\x,\y)
\advance\x by 6
\advance\y by 6 
\rput(\x,\y){{\bf #3}}}

\usepackage{amssymb}
\usepackage{graphicx}
\usepackage{epstopdf}
\DeclareGraphicsRule{.tif}{png}{.png}{`convert #1 `basename #1 .tif`.png}
\usepackage{amscd}
\usepackage{appendix}
\usepackage{graphics}
\usepackage{tikz}
\def\proof{\par{\it Proof}. \ignorespaces}
\def\endproof{{\ \vbox{\hrule\hbox{%
     \vrule height1.3ex\hskip0.8ex\vrule}\hrule }}\par}

\theoremstyle{definition}

\theoremstyle{remark}

\numberwithin{equation}{section}

\let\trueint=\int
\let\truesum=\sum
\def\int{\mathop{\textstyle\trueint}\limits}
\def\sum{\mathop{\textstyle\truesum}\limits}
\let\<=\langle
\let\>=\rangle

\def\sech{\mathop{\rm sech}\nolimits}

\def\t{\mathbf{t}}
\def\0{\mathbf{0}}
\def\z{\mathbf{z}}

\def\edge{\ar@{-}}
\def\dedge{\ar@{.}}

\usepackage{amsmath, amsthm, amssymb, amsbsy,amsfonts,amscd}
\usepackage[mathscr]{eucal}
\usepackage{epsfig}
\usepackage{young}
\newtheorem{theorem}{Theorem}[section]

\newtheorem{definition}[theorem]{Definition}
\newtheorem{proposition}[theorem]{Proposition}
\newtheorem{lemma}[theorem]{Lemma}

\newtheorem{example}[theorem]{Example}

\newtheorem{remark}[theorem]{Remark}

\usepackage{amsfonts}
\usepackage{xy}
\usepackage{amssymb}
\usepackage{amsmath}

\newcommand{\C}{\mathbb C}

\newcommand{\thmrefer}[1]{\renewcommand\thetheorem
  {\protect\ref{#1}}\addtocounter{theorem}{-1}}

\xyoption{all}
\CompileMatrices

\begin{document}

\begin{abstract}
We show that the $\tau$-functions of the regular KP solitons from the totally nonnegative Grassmannians can be expressed by the Riemann theta functions on singular curves.  We explicitly write the parameters in the Riemann theta function in terms of those of the KP soliton. We give a short remark on the Prym theta function on a double covering of singular curves.
We also discuss the KP soliton on quasi-periodic background, which is obtained by applying the vertex operators to the Riemann theta function.
\end{abstract}

\maketitle

\tableofcontents

%%%%%%%%%%%%%%%%%%%%%%%%%%%%%%%%%%%%%%%

\section{Introduction}
It is well known that the KP solitons are given by the so-called $\tau$-function in the form of Wronskian determinant (see e.g. \cite{H:04} and below).
The regular KP solitons have been classified in terms of the totally non-negative Grassmannian $\text{Gr}(N,M)$ where $N$ and $M$ are positive integers with $N<M$ \cite{CK:09, KW:13}.
These solitons are uniquely determined by a set of distinct parameters $(\kappa_1,\kappa_2,\ldots,\kappa_{M})\in\mathbb{R}^M$ and an element $A\in\text{Gr}(N,M)$  (see e.g. \cite{KW:14, K:17}). It is also known that the quasi-periodic (algebro-geometric) solutions of the KP equation can be constructed by the Riemann theta function on smooth curve (see e.g. \cite{BBEIM:94}).The real regular solutions are also discussed in \cite{BBEIM:94}.

In this paper, we show that the regular KP solutions can be expressed by the theta function on singular curve with several degenerations of double points. The theta function on singular curve with ordinary double points was 
obtained by Mumford in his famous book \cite{Mu:84}. We call this theta function the $M$-theta function in this paper. Then we give explicit formulas of the $M$-theta function in terms of $(\kappa_1,\ldots,\kappa_M)$ and $A\in\text{Gr}(N,M)$. It turns out that a general KP soliton is obtained by a further degeneration of the nodal points. It is quite interesting to study these singular structure from a viewpoint of algebraic geometry. But we will not discuss the details in this note. We would like to mention a recent paper \cite{I:22}, where the singular structure of the Riemann surface has been studied in connection with the $\tau$-function of the KP solutions \cite{I:22}. We also give a remark on the Prym theta function as a $\tau$-function for the BKP hierarchy, which is given by a Pfaffian form \cite{Mu:74,H:89,Ta:91}.

The paper is organized as follows. In Section \ref{sec:Riemann}, we give a brief description of the Riemann theta function on singular curve following Chapter 5 in the book \cite{Mu:84}. We define the $M$-theta function and note that the $M$-theta function can be expressed in the form with vertex operators. In Section \ref{sec:KP}, we start with a brief introduction of the regular soliton solutions of the KP equation. The section will be concluded with the main theorem (Theorem \ref{thm:main}),
which gives explicit formula of the $\tau$-function in terms of the $M$-theta function. We consider two main examples in Section \ref{sec:Example}, where we explicitly demonstrate how one can construct the regular KP soliton directly from the matrix $A\in \text{Gr}(N,M)$.  In the final section \ref{sec:SolitonQ},
we give a formula which describes the KP solitons on quasi-periodic background. The formula is obtained by applying the vertex operators given in Section \ref{sec:Riemann}.

%%%%%%%%%%%%%%%%%%%%%%%%%%%%%%%

\section{The Riemann theta function}\label{sec:Riemann}
The Riemann theta function of genus $g$ is defined as follows.
Let $\mathcal{H}_g$ be the set of symmetric $g\times g$ matrix whose imaginary part is positive definite, called the Siegel upper-half-plane and the element $\Omega\in\mathcal{H}_g$ is called the period matrix.  The Riemann theta function of genus $g$ is then defined by
\begin{equation}\label{eq:Riemann}
\vartheta_g({\bf z};\Omega):=\sum_{{\bf m}\in\mathbb{Z}^g}\exp 2\pi i \left(\frac{1}{2}{\bf m}^T\Omega {\bf m}+{\bf m}^T {\bf z}\right),
\end{equation}
for ${\bf z}\in \C^g$, and $\mathbf{m}^T$ is the transpose of the column vector $\mathbf{m}$.	The $g\times g$ period matrix $\Omega$ is calculated as follows (see e.g. \cite{Mu:83, FK:91}). Let $\mathcal{C}$ be a compact Riemann surface of genus $g$. Let $\{a_1,\ldots,a_g, b_1,\ldots,b_g\}$ be a canonical homology basis on $\mathcal{C}$. Let $\{\omega_1,\ldots,\omega_g\}$ be a basis of the normalized holomorphic differentials on $\mathcal{C}$. Then the period matrix $\Omega=(\Omega_{i,j})$ is calculated by
\[
\oint_{a_i}\omega_j=\delta_{i,j},\qquad \oint_{b_i}\omega_j=\Omega_{i,j},\qquad \text{for}\quad 1\le i,j\le g.
\]
For example of the case of $g=1$, we consider an elliptic curve $\mu^2+\prod_{j=1}^3(\lambda-e_j)=0$ with the real parameters $e_1>e_2>e_3$. Then we have 
\[
\omega=\frac{d\lambda}{A\mu} \quad\text{which gives}\quad \Omega={\oint_b\omega}=i\frac{K'(k)}{K(k)},
\]
where $A=\oint_a\frac{d\lambda}{\mu}$, $K'(k)=K(\sqrt{1-k^2})$, and  $K(k)$ is the complete elliptic integrals of the first kind,
\[
K(k)=\int_0^1\frac{dt}{\sqrt{(1-t^2)(1-k^2t^2)}},\quad\text{with}\quad k^2=\frac{e_2-e_3}{e_1-e_3}.
\]
The equation \eqref{eq:Riemann} is the Jacobi theta function.

We just mention that in general, an explicit computation of the period matrix is hard for given algebraic curve.
(See e.g. \cite{G:89, Mu:83, Mu:84}.)

%%%%%%%%%%%%%%%%%%%%%%%%%%%%%%%%%
\subsection{The Riemann theta function on a singular curve}
In \cite{Mu:84}[Chapter 5, p.3.243], Mumford considered the theta function on singular curve. Let $C$ be a singular curve of (arithmetic) genus $g$, and let $S=\text{Sing}(C)=\{p_1,\ldots,p_g\}\subset C$.  Assume that the singularities of $C$ are only ordinary double points $p_1,\ldots,p_g$ and that $C$ has normalization
\[
\pi:\mathbb{P}~\to~ C.
\]
If $\pi^{-1}(p_i)=\{b_i,c_i\}$ for $i=1,\ldots,g$, this implies that $C$ is just $\mathbb{P}$ with $g$ pairs of points $\{b_i,c_i\}$ identified. Then the Riemann theta function on the singular curve $C$ can be obtained by taking the limits of the diagonal parts of the period matrix $\Omega$,
\begin{equation}\label{eq:limits}
\text{Im}~\Omega_{i,i}~\to~ \infty \quad \text{for}\quad 1\le i\le g,
\end{equation}
and the limits of $\Omega_{k,l}$ for $k\ne l$ are given by
\begin{equation}\label{eq:Omega}
\Omega_{k,l}~\to~\tilde\Omega_{k,l}:=\frac{1}{2\pi i}\ln\frac{(b_k-b_l)(c_k-c_l)}{(b_k-c_l)(c_k-b_l)},
\end{equation}
(see also \cite{F:73, Ka:11}).
The limit of \eqref{eq:Riemann} will be just $1$, which corresponds to the choice ${\bf m}^T=(0,\ldots,0)$. To obtain a nontrivial example, we consider the shifts
\[
z_i~\to~z_i-\frac{1}{2}\Omega_{i,i},\quad\text{for}\quad i=1,\ldots,g,
\]
which then gives the Riemann theta function with shifted variable ${\bf z}=(z_1,\ldots,z_g)^T\in\C^g$,
\begin{equation}\label{eq:RiemannS}
\vartheta_g({\bf z};\Omega)=\sum_{{\bf m}\in\mathbb{Z}^g}\exp 2\pi i\left(\frac{1}{2}\sum_{i=1}^gm_i(m_i-1)\Omega_{i,i}+\sum_{i<j}m_im_j\Omega_{i,j}+\sum_{i=1}^gm_iz_i\right),
\end{equation}
where ${\bf m}=(m_1,\ldots,m_g)$.
Then the limits $\Omega_{i,i}\to+i\infty$ for all $i=1,\ldots, g$ lead to
\begin{align}\label{eq:Theta}
\vartheta_g(\z;\Omega)~\to~ &~\tilde\vartheta^{(g)}_g(\z;\tilde\Omega):=\sum_{{\bf m}\in\{0,1\}^g}\exp 2\pi i\left(\sum_{k<l}m_km_l\tilde\Omega_{k,l}+\sum_{k=1}^gm_kz_k\right)\\
=&1+\sum_{k=1}^g e^{2\pi i z_k}+\sum_{1\le k<l\le g}e^{2\pi i\tilde\Omega_{k,l}}e^{2\pi i (z_k+ z_l)}+\ldots+e^{2\pi i\sum_{k<l}\tilde\Omega_{k,l}}e^{2\pi i\sum_{j=1}^g z_j},\nonumber
\end{align}
that is, the {infinite} sum of exponential terms in the $\vartheta$-function \eqref{eq:Riemann} becomes a finite sum of $2^g$ exponential terms corresponding to the case with $m_i\in\{0,1\}$.
We remark here that $\tilde\vartheta(\z;\tilde\Omega)$ in \eqref{eq:Theta} can be expressed in a simple form using the vertex operator $\hat V_j[\tilde\Omega]$, which is defined by
\begin{equation}\label{eq:VO}
\hat{V}_j[\tilde\Omega]:=\exp 2\pi i\left(z_j+\sum_{k=1, k\ne j}^g\tilde\Omega_{j,k}\frac{\partial}{\partial z_k}\right).
\end{equation}
It is easy to verify the following lemma for the product of two vertex operators.
\begin{lemma}\label{lem:VO}
For $j\ne k$, we have
\begin{align}\label{eq:VOR}
\hat V_j[\tilde\Omega]\cdot \hat V_k[\tilde\Omega]=\frac{(b_j-b_k)(c_j-c_k)}{(b_j-c_k)(c_j-b_k)}:\hat V_j[\tilde\Omega]\cdot\hat V_k[\tilde\Omega]:,\nonumber
\end{align}
where $:\hat V_j\cdot\hat V_k:$ is the normal ordering of the product $\hat V_j\cdot\hat V_k$, and we have used \eqref{eq:Omega}. 
\end{lemma}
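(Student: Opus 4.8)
The plan is to treat each $\hat V_j[\tilde\Omega]$ as the exponential of a first-order differential operator and to apply the elementary Baker--Campbell--Hausdorff identity $e^X e^Y = e^Y e^X e^{[X,Y]}$, which holds whenever $[X,Y]$ is a scalar commuting with everything. First I would split the exponent in \eqref{eq:VO} into its multiplication part $P_j := 2\pi i\, z_j$ and its differentiation part $D_j := 2\pi i\sum_{l\ne j}\tilde\Omega_{j,l}\frac{\partial}{\partial z_l}$, so that $\hat V_j[\tilde\Omega] = e^{P_j + D_j}$. The key structural observation is that $D_j$ never differentiates with respect to $z_j$ itself (the sum omits $l=j$), so $[P_j, D_j]=0$; hence each vertex operator factors cleanly as $\hat V_j[\tilde\Omega] = e^{P_j} e^{D_j} = e^{D_j} e^{P_j}$.

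Next I would write the product as $\hat V_j[\tilde\Omega]\, \hat V_k[\tilde\Omega] = e^{P_j} e^{D_j} e^{P_k} e^{D_k}$ and push every differentiation factor to the right of every multiplication factor, which is precisely the normal-ordered arrangement $:\!\hat V_j \hat V_k\!: \;= e^{P_j + P_k} e^{D_j + D_k}$. This rearrangement is legitimate because the $P$'s mutually commute and the $D$'s mutually commute, so the only interchange required is that of $e^{D_j}$ with $e^{P_k}$. Using the canonical relation $[\frac{\partial}{\partial z_l}, z_k] = \delta_{l,k}$, one finds that $[D_j, P_k]$ is a scalar: only the $l=k$ term of the sum defining $D_j$ survives (permissible since $k\ne j$), so $[D_j,P_k]$ is a constant multiple of $\tilde\Omega_{j,k}$. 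Because this commutator is central, the BCH series truncates after one term and the single interchange produces exactly the scalar prefactor $\exp([D_j,P_k])$.

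Finally I would collect the factors to obtain $\hat V_j[\tilde\Omega]\, \hat V_k[\tilde\Omega] = \exp([D_j,P_k])\,:\!\hat V_j \hat V_k\!:$ and identify the scalar $\exp([D_j,P_k])$ with the cross-ratio through the definition \eqref{eq:Omega} of $\tilde\Omega_{j,k}$, namely $2\pi i\,\tilde\Omega_{j,k} = \ln\frac{(b_j-b_k)(c_j-c_k)}{(b_j-c_k)(c_j-b_k)}$, so that $\exp(2\pi i\,\tilde\Omega_{j,k})$ is exactly the claimed coefficient. As a consistency check, the symmetry $\tilde\Omega_{j,k}=\tilde\Omega_{k,j}$ together with the manifest symmetry of the cross-ratio under $j\leftrightarrow k$ shows that computing $\hat V_k \hat V_j$ instead yields the same prefactor, as it must.

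I expect the only delicate point to be bookkeeping rather than any genuine difficulty. One must confirm that reaching the normal-ordered form from $e^{P_j} e^{D_j} e^{P_k} e^{D_k}$ requires only the one swap of $e^{D_j}$ past $e^{P_k}$, and then track the overall normalization constant multiplying $\tilde\Omega_{j,k}$ in the exponent carefully, so that the exponent is exactly $2\pi i\,\tilde\Omega_{j,k}$ and hence matches \eqref{eq:Omega} on the nose. There are no convergence or analytic subtleties, since the entire computation reduces to the Heisenberg commutation relations among the $z_l$ and $\frac{\partial}{\partial z_l}$.
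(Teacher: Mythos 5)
Your approach is correct and is surely the verification the paper intends: the paper offers no proof of this lemma beyond the phrase ``It is easy to verify,'' and the argument it has in mind is exactly yours --- factor each vertex operator into its multiplication part and its shift part (legitimate because the sum in \eqref{eq:VO} omits $k=j$, so the two parts commute), move $e^{D_j}$ past $e^{P_k}$ at the cost of a central commutator, note that the $P$'s commute among themselves and the $D$'s commute among themselves, and read off the scalar prefactor.

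However, the bookkeeping point you deferred is not mere bookkeeping; it is the one place where the computation collides with the paper's printed formula. Read literally, \eqref{eq:VO} puts the factor $2\pi i$ on the whole exponent, so $D_j = 2\pi i\sum_{l\ne j}\tilde\Omega_{j,l}\partial_{z_l}$ and $P_k = 2\pi i z_k$, whence $[D_j,P_k] = (2\pi i)^2\tilde\Omega_{j,k}$, and your prefactor $\exp([D_j,P_k])$ comes out as the cross-ratio raised to the power $2\pi i$, not the cross-ratio itself. The lemma as stated is true only under the reading in which $2\pi i$ multiplies $z_j$ alone, i.e. $\hat V_j[\tilde\Omega] = \exp\bigl(2\pi i z_j + \sum_{k\ne j}\tilde\Omega_{j,k}\partial_{z_k}\bigr)$, so that $e^{D_j}$ shifts $z_k$ by $\tilde\Omega_{j,k}$ and $[D_j,P_k] = 2\pi i\,\tilde\Omega_{j,k}$, which by \eqref{eq:Omega} exponentiates to exactly the claimed coefficient. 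That this is the intended normalization is forced independently by \eqref{eq:Ltheta}: to recover \eqref{eq:Theta} one needs $\hat V_j\cdot 1 = e^{2\pi i z_j}$ and $\hat V_j\, e^{2\pi i z_k} = e^{2\pi i\tilde\Omega_{j,k}}\,e^{2\pi i(z_j+z_k)}$, which again requires the shift part to carry no factor of $2\pi i$. So a complete write-up of your proof should begin by fixing this normalization (it is a typo in \eqref{eq:VO}) rather than assuming the constants will match; with that correction, every step of your argument, including the final symmetry check, goes through verbatim.
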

Lemma \ref{lem:VO} implies that the vertex operators commute.
Then the function $\tilde\vartheta(\z;\tilde\Omega)$ in \eqref{eq:Theta} is expressed by
\begin{equation}\label{eq:Ltheta}
\tilde\vartheta^{(g)}_g(\z;\tilde\Omega)=\prod_{j=1}^g(1+\hat V_j[\tilde\Omega])\cdot 1.
\end{equation}

%Let $\tilde\vartheta$ denote the theta function with the limits $\Omega_{j,j}$ for all $j=1,\ldots,g$, and write
%\begin{align}\label{eq:Ltheta}
%\tilde\vartheta(\tilde{\phi},\tilde\Omega)=&1+\sum_{j=1}^g e^{\tilde\phi_j}+\sum_{1\le k<l\le g}C_{k,l}\,e^{\tilde\phi_k+\tilde\phi_l}+\cdots +\prod_{1\le k<l\le g}C_{k,l}\,e^{\sum_{j=1}^g\tilde\phi_j},
%\end{align}
%where we defined
%\begin{equation}\label{eq:Ckl}
%\tilde\phi_j=2\pi i z_j,~\,(1\le j\le g),\qquad C_{k,l}=e^{2\pi i\tilde\Omega_{k,l}}\,~(1\le k<l\le g).
%\end{equation}
We call the $\tilde\vartheta^{(g)}_g$ function the $M$-theta function after Mumford (see Chapter 5 in his book \cite{Mu:84}).
We will also define $\tilde\vartheta_g^{(n)}$, which is a theta function on a singular curve of genus $g-n$ having with $n$ singular points (see Section \ref{sec:SolitonQ}).
In \cite{Mu:84}, the $g$ pairs of points $\{b_i,c_i\}$ on $\mathbb{P}$ are taken as $b_i=-c_i$ to show that  \eqref{eq:Ltheta} gives a $g$-soliton solution of the KdV equation with appropriate choice of the variables $\z$. One can also see that the $M$-theta function gives $g$-soliton solution of the KP equation (see below).
In this paper, we show that any KP soliton generated by the totally nonnegative Grassmannian (see e.g. \cite{K:17}) can be written in the form \eqref{eq:Ltheta} with appropriate restrictions on $z_j$ and $\tilde\Omega_{k,l}$, i.e. further limits of $\Omega_{k,l}$, which is also related to an identification with points $\{b_i,c_i\}$ on $\mathbb{P}$. 

\begin{remark}\label{rem:Grammian}
It is also interesting to note that the $M$-theta function can be expressed in the determinant form \cite{Ka:11},
\[
\tilde\vartheta_g^{(g)}(\z;\tilde\Omega)=\text{det}\left(\delta_{j,k}+\frac{b_j-c_j}{b_j-c_k}e^{\pi i(z_j+z_k)}\right)_{1\le j,k \le g}
\]
Expressing the variables $\z$ and $\tilde\Omega$ in terms of the soliton parameters (see below), this formula can be identified as the Grammian form \cite{CLM:10, H:04}.
\end{remark}

%%%%%%%%%%%%%%%%%%%%%%%%%%%%%%%
\subsection{The Prym theta function on a singular curve}
Here we give a remark on the theta function of Prym varieties of branched coverings \cite{Mu:74, DJKM:82,Ta:91}.
First, we note that the period matrix $\tilde\Omega_{j,k}$ in \eqref{eq:Omega} is obtained by the integral,
\[
\tilde\Omega_{j,k}=\frac{1}{2\pi i}\int_{c_j}^{b_j}\,\omega_k(z),
\]
where $\omega_k$ is the differential of third kind (which is obtained by the holomorphic differential of the first kind in the singular limit \cite{Ka:11, I:22}),
\[
\omega_k(z)=\left(\frac{1}{z-b_k}-\frac{1}{z-c_k}\right)\,dz,\qquad \text{for}\quad k=1,\ldots,g.
\]

In the theory of Prym variety \cite{Mu:74}, we have a double covering map, $\mu:\tilde{C}\to C$ for a curve $C$ of genus $g$.
We also have an involution, interchanging two sheets above any point in $C$, $\iota:\tilde{C}\to \tilde C$ (see also \cite{Ta:91}).
The involution acts on the differentials $\omega_k^B(z)$ by
\[
\iota(\omega_k^B(z))=\omega^B_k(-z)=-\omega_k^B(z).
\]
Then the normalization $\tilde\pi:\mathbb{P}\to\tilde{C}$ has the inverse image $\tilde\pi^{-1}(p^{\pm}_j)=\{\pm b_j,\pm c_j\}$ with $\mu(p^{\mp}_j)=p_j\in C$.
The differentials for a singular curve $C$ may be expressed by
\[
\omega_k^B(z)=\left(\frac{1}{z-b_k}-\frac{1}{z-c_k}-\frac{1}{z+b_k}+\frac{1}{z+c_k}\right)\,dz.
\]
Note that one can write $\omega_k^B=\omega^+_k+\omega_k^-$, where
\[
\omega^{\pm}_k(z)=\pm\left(\frac{1}{z\mp b_k}-\frac{1}{z\mp c_k}\right)\,dz.
\]
That is, each $\omega_k^{\pm}$ is the differential on one sheet of double covers $\tilde{C}$.
Then the period matrix $\tilde\Omega^B_{k,l}$ can be computed as
\[
\tilde\Omega^B_{j,k}=\frac{1}{2\pi i}\int_{c_j}^{b_j}\,\omega_k^B(z)=\frac{1}{2\pi i}\ln \frac{(b_j-b_k)(c_j-c_k)(b_j+c_k)(c_j+b_k)}{(b_j-c_k)(c_j-b_k)(b_j+b_k)(c_j+c_k)}.
\]
We now define the Prym $M$-theta function corresponding to two-sheeted coverings of singular curve, which is defined  in the same form of \eqref{eq:Theta} with $\tilde\Omega_{j,k}=\tilde\Omega_{j,k}^B$,
\begin{equation}\label{eq:Btheta}
\tilde\vartheta^B_g(z;\tilde\Omega^B):=\sum_{{\bf m}\in\{0,1\}^g}\exp 2\pi i\left(\sum_{j<k}m_jm_k\,\tilde\Omega^B_{j,k}+\sum_{k=1}^gm_kz_k\right).
\end{equation}

\begin{remark}\label{rem:Pfaffian}
Introduce new variables $\{\xi_j:1\le j\le 2g\}$ and write
\[
2\pi i z_k=\xi_{2k-1}+\xi_{2k}+\ln\frac{b_k+c_k}{b_k-c_k},\qquad \text{for}\quad k=1,\ldots,g.
\]
Then, following \cite{H:89}, one can show that the Prym $M$-theta function can be expressed by the Pfaffian of $2g\times 2g$ skew symmetric matrix,
\begin{equation}\label{eq:Btau}
\tilde\vartheta_g^B(z;\tilde\Omega^B)=\text{Pf}\left(c_{j,k}+\frac{q_j-q_k}{q_j+q_k}\,e^{\xi_j+\xi_k}\right)_{1\le j<k\le 2g},
\end{equation}
where $c_{j,k}=-c_{k,j}$ are constants, and for $1\le j<k\le 2g$ and $1\le l\le 2g$,
\begin{align*}
c_{j,k}&=\left\{\begin{array}{llll}
1,\quad&\text{if}~\, j=2n-1~\text{and}~k=2n,\quad(1\le n\le g),\\
0,\quad&\text{otherwise}
\end{array}\right.\\
q_l&=\left\{\begin{array}{lll}
b_n,\quad&\text{if}~\, l=2n-1,\\
-c_n,\quad&\text{if}~\, l=2n,
\end{array}\right.\quad\text{for}\quad1\le n\le g.
\end{align*}
Notice that the Prym $M$-theta function can be also expressed using the vertex operators in the same way as \eqref{eq:Ltheta} with $\tilde\Omega=\tilde\Omega^B$.
We will also note the Pfaffian formulas of the soliton solutions for the BKP hierarchy in the next section.
\end{remark}

%%%%%%%%%%%%%%%%%%%%%%%%%%%%%%%

\section{The KP equation}\label{sec:KP}
In this section, we give a basic information about the KP solitons for the purpose of the present paper (see e.g. \cite{K:17} for the details). The KP equation is a nonlinear partial differential equation in the form,
\begin{equation}\label{eq:KP}
\partial_x(-4\partial_t u+6u\partial_xu+\partial_x^3u)+3\partial_y^2u=0,
\end{equation}
The solution of the KP equation is given in the following form,
\begin{equation}\label{eq:u}
u(x,y,t)=2\partial_x^2\ln \tau(x,y,t),
\end{equation}
where $\tau(x,y,t)$ is called the $\tau$-function of the KP equation.

The $\tau$-function satisfies the Hirota bilinear equation,
\[
(-4D_xD_t+D_x^4+3D_y^2)\tau\circ\tau=0,
\]
where $D_x$'s are the Hirota derivatives defined by
\[
D_x^m f\circ g:=(\partial_x-\partial_{x'})^m f(x)g(x')\Big|_{x=x'}
\]

%%%%%%%%%%%%%%%%%%%%%%%%%%%%%%%%%%%%%
\subsection{Quasi-periodic solutions}
A quasi-periodic solution is given by a Riemann theta function
\[
\tau(x,y,t)=\vartheta_g({\bf{u}}x+{\bf v}y+{\bf w}t+{\bf z}_0,\Omega),
\]
where $\bf{u},\bf{v}, \bf{w}$ and ${\bf z}_0$ are real $g$-dimensional constant vectors. The vectors ${\bf u},{\bf v}$ and ${\bf w}$ satisfy the nonlinear dispersion relation (more precisely, the Schottky problem). The constant vectors are given by integrals of normalized meromorphic differentials (see e.g. \cite{BBEIM:94}).

In the case of $g=1$, we take
\[
z=\frac{1}{4K(k)}(\alpha x+\beta y+\gamma t+\omega_0)+\frac{1}{2},
\]
where the nonlinear dispersion for the constants $(\alpha, \beta, \gamma)$ is
\[
\frac{-4\alpha\gamma+3\beta^2}{\alpha^4}=3\frac{E(k)}{K(k)}-2+k^2.
\]
Here $E(k)$ is the complete elliptic integral of the second kind,
\[
 E(k)=\int_0^1\frac{\sqrt{1-k^2t^2}}{\sqrt{1-t^2}}\,dt,\qquad\text{for}\quad 0\le k\le 1.
\]
Then the solution $u(x,y,t)$ is given by
\begin{equation}\label{eq:dn}
u(x,y,t)=2\partial_x^2\log \vartheta_1(z;\Omega)=2\alpha^2\left(\text{dn}^2\frac{\alpha x+\beta y+\gamma t+\omega_0}{2}-\frac{E(k)}{K(k)}\right).
\end{equation}
where $\text{dn}\,z$ is the Jacobi elliptic function. 
Then we take the limit $k\to 1$, which is called the soliton limit, and we have
\[
u(x,y,t)~\to~2\alpha^2\sech^2\frac{\alpha x+\beta y+\gamma t+\omega_0}{2}.
\]
Here we have used $K(k)\to\infty$ and $K'(k)\to\frac{\pi}{2}$, and note
\[
\text{dn}^2 z-\frac{E}{K}=\left(\frac{\pi}{2K'}\right)^2\sum_{l\in\mathbb{Z}}\sech^2\frac{\pi K}{K'}\left(\frac{z}{2K}-l\right)~\to~\sech^2 z.
\]
Note also that the dispersion relation for the soliton solution is given by
\begin{equation}\label{eq:dispersion}
-4\alpha\gamma+3\beta^2+\alpha^4=0
\end{equation}
\begin{remark}
In this paper, we consider the limit $k\to 0$, which gives $K'\to\infty$, i.e.
\[
\Omega~\to~+i\infty,\quad \text{and}\quad K(k),~E(k)~\to~\frac{\pi}{2}.
\]
This limit corresponds to the linear limit. However, considering $z$ to be imaginary and shifting $z$ by $\frac{\Omega}{2}$, one can also obtain the soliton solution as will be explained below. Note that changing the wave parameters $\kappa:=(\alpha,\beta, \gamma)$ to $i\kappa$, the dispersion relation becomes \eqref{eq:dispersion}.
\end{remark}
%%%%%%%%%%%%%%%%%%%

\subsection{Soliton solutions}
The soliton solutions are constructed as follows:
Let $\{f_i(x,y,t):1\le i\le N\}$ be a set of linearly independent functions $f_i(x,y,t)$  satisfying the following system of linear equations,
\begin{equation}\label{eq:f}
\frac{\partial f_i}{\partial y}=\frac{\partial^2 f_i}{\partial x^2},\quad \text{and}\quad  \frac{\partial f_i}{\partial t}=\frac{\partial^3 f_i}{\partial x^3}\qquad i=1,\ldots,N.
\end{equation}
The $\tau$-function is then given by the Wronskian with respect $x$-variable,
\begin{equation}\label{eq:Wr}
\tau(x,y,t)=\text{Wr}(f_1,f_2,\ldots,f_N).
\end{equation}
(See, e.g. \cite{K:17} for the details.)

As a fundamental set of the solutions of \eqref{eq:f}, we take the exponential functions,
\begin{equation*}\label{eq:E}
E_j(x,y,t)=e^{\xi_j(x,y,t)}\quad\text{with}\quad \xi_j(x,y,t):=\kappa_jx+\kappa_j^2y+\kappa_j^3t\qquad \text{for}\quad j=1,\ldots,M.
\end{equation*}
where $\kappa_j$'s are arbitrary real constants. In this paper, we consider the regular soliton solutions, for which we assume the ordering
\begin{equation}\label{eq:orderK}
\kappa_1\,<\,\kappa_2\,<\,\kappa_3\,<\,\cdots\,<\,\kappa_M.
\end{equation}
For the soliton solutions, we consider $f_i(x,y,t)$ as a linear combination of the exponential solutions,
\begin{equation}\label{eq:Ef}
f_i(x,y,t)=\sum_{j=1}^Ma_{i,j}E_j(x,y,t)\qquad\text{for}\qquad i=1,\ldots,N.
\end{equation}
where $A:=(a_{i,j})$ is an $N\times M$ constant matrix of full rank, $\text{rank}(A)=N$.
Then the $\tau$-function \eqref{eq:Wr} is expressed by
\begin{equation}\label{eq:tauE}
\tau(x,y,t)=|AE(x,y,t)^T|,
\end{equation}
where $E(x,y,t)^T$ is the transpose of the matrix $E(x,y,t)$, which is given by
\begin{equation}\label{eq:Et}
E(x,y,t)=\begin{pmatrix}
E_1 & E_2 &\cdots & E_M\\
\partial_1E_1&\partial_1E_2&\cdots &\partial_1E_M\\
\vdots &\vdots &\ddots &\vdots\\
\partial_1^{N-1}E_1 &\partial_1^{N-1}E_2&\cdots &\partial_1^{N-1}E_M
\end{pmatrix}.
\end{equation}
Note here that the set of exponential functions $\{E_1,\ldots,E_M\}$ gives a basis of $M$-dimensional space
of the null space of the operator $\prod_{i=1}^M(\partial_x-\kappa_i)$, and we call it a ``basis''
of the KP soliton.  Then the set of functions $\{f_1,\ldots,f_N\}$ represents an $N$-dimensional subspace
of $M$-dimensional space spanned by the exponential functions. This leads naturally to the structure
of a finite Grassmannian $\text{Gr}(N,M)$, the set of $N$-dimensional subspaces in $\C^M$.
Then the $N\times M$ matrix $A$ of full rank can be identified as a point of $\text{Gr}(N,M)$, and in this paper, we assume
$A$ to be in the reduced row echelon form (RREF). Here we consider the irreducible matrix $A$, which is defined as follows.
\begin{definition}
An $N\times M$ matrix $A$ in RREF is irreducible, if
\begin{itemize}
\item[(a)] in each row, there is at least one nonzero element besides the pivot, and
\item[(b)]  there is no zero column.
\end{itemize}
This implies that the first pivot is located at $(1,1)$ entry, and the last pivot should be at $(N, i_N)$ with $N\le i_N<M$.
\end{definition}

The $\tau$-function can be expressed as the following formula using the Binet-Cauchy lemma (see e.g. \cite{K:17}),
\begin{equation}\label{eq:tauPl}
\tau(x,y,t)=\sum_{I\in\binom{[M]}{N}}\Delta_{I}(A)E_{I}(x,y,t),
\end{equation}
where $I=\{i_1<i_2<\cdots<i_N\}$ is an $N$ element subset in $[M]=\{1,2,\ldots,M\}$, $\Delta_{I}(A)$ is the $N\times N$ minor with the column vectors indexed by $I=\{i_1,\ldots,i_N\}$,
and $E_{I}(x,y,t)$ is the $N\times N$ determinant of the same set of the columns in \eqref{eq:Et}, which is given by
\begin{equation}\label{eq:EI}
E_{I}=\prod_{k<l}(\kappa_{i_l}-\kappa_{i_k})\,E_{i_1}\cdots E_{i_N}=\prod_{k<l}(\kappa_{i_l}-\kappa_{i_k})
\exp\left(\xi_{i_1}+\cdots+\xi_{i_N}\right).
\end{equation}
The minor $\Delta_{I}(A)$ is also called the Pl\"ucker coordinate, and the $\tau$-function represents a
point of $\text{Gr}(N,M)$ in the sense of the Pl\"ucker embedding, $\text{Gr}(N,M)\hookrightarrow \mathbb{P}(\wedge^N\C^M): A\mapsto \{\Delta_I(A):I\in\binom{[M]}{N}\}$. It is shown in \cite{KW:13} that all the regular KP solitons can be obtained by choosing $A$ from the totally nonnegative Grassmannian, where $\Delta_J(A)\ge 0$ for each index set $J=\{j_1,\ldots,j_N\}$.

%%%%%%%%%%%%%%%%%%%%%%%%
\subsection{The $M$-theta function as the $\tau$-function of the KP solitons}
Here we give the main theorem, which shows that the $\tau$-function \eqref{eq:tauPl} can be expressed by the $M$-theta function with further identifications
of $\{b_i,c_i\}$ on $\mathbb{P}$.

Let us start with the definition of the matroid for an irreducible matrix $A$ in the totally nonnegative Grassmannian $\text{Gr}(N,M)$, 
\[
\mathcal{M}(A)=\left\{I\in\binom{[M]}{N}: \Delta_I(A)>0\right\}.
\]
Let $I_0$ be the lexicographically minimum element of $\mathcal{M}(A)$. Then we have the decomposition,
\begin{equation}\label{eq:decompM}
\mathcal{M}(A)=\bigcup_{n=0}^N \mathcal{M}_n(A),
\end{equation}
where
\[
\mathcal{M}_n(A):=\left\{J\in\mathcal{M}(A):|J\cap I_0|=N-n\right\}.
\]
Note that $\mathcal{M}_0(A)=I_0$. 

Let $[i,j]$ denote the difference of the functions $\xi_j(x,y,t)=\kappa_jx+\kappa_j^2y+\kappa_j^3t$,
\begin{equation}\label{eq:ij}
[i,j]:=\xi_j-\xi_i,\quad\text{for}\quad i\in I,~j\in J,
\end{equation}
where $I,J\in\mathcal{M}(A)$. Notice that $(\alpha,\beta,\gamma)$ in $\xi_j-\xi_i=\alpha\,x+\beta\,y+\gamma\,t$ where 
\[
\alpha=\kappa_j-\kappa_i,\quad \beta=\kappa_j^2-\kappa_j^2,\quad \gamma=\kappa_j^3-\kappa_i^3,
\]
satisfies the dispersion relation of the KP equation \eqref{eq:dispersion}.
\begin{remark}
The $\tau$-function defined by $\tau=1+e^\phi$ with $\phi=[i,j]$ gives one-soliton solution,
\[
u(x,y,t)=2\partial_x^2\log\tau(x,y,t)=\frac{(\kappa_i-\kappa_j)^2}{2}\sech^2\frac{1}{2}\phi(x,y,t).
\]
\end{remark}
Let us define the difference $I_0-J$ for $J\in\mathcal{M}_1(A)$, in which each element for $J\in \mathcal{M}_1(A)$ is also denoted as $[i,j]$ for $i\in I_0\setminus J$ and
$j\in J\setminus I_0$. We then define the set
\[
P_1(A)=\{[i,j]:i\in I_0\setminus J,~j\in J\setminus I_0,~ J\in\mathcal{M}_1(A)\}
\]
Note that each element in $P_1(A)$ can be expressed as $[i_k,j_l]$ where 
\begin{itemize}
\item[(a)] $i_k\in I_0\setminus J$ is the $k$-th pivot of $A$,
\item[(b)]  $j_l\in J\setminus I_0$ is a nonzero element in the $k$-th row of $A$.
\end{itemize}
Let $j^{(k)}_l\in J\setminus I_0$ $(l=1,\ldots, n_k)$ be the nonzero elements in the $k$th row of $A$ with the order $j_l^{(k)}<j_{m}^{(k)}$ for $l<m$. Then we define
\begin{equation}\label{eq:phi}
\phi_{\tilde g_{k-1}+l}:=[i_k,j_l^{(k)}]=\xi_{j_l^{(k)}}-\xi_{i_k},\quad l=1,\ldots,n_k,
\end{equation}
where $\tilde g_{k}=n_1+\ldots+n_k$ with $\tilde g_0=0$. 
Let $\tilde g$ denote the total number of elements in $\mathcal{M}_1(A)$ given by
\begin{equation}\label{eq:Tg}
\tilde g:=\tilde g_N=|\mathcal{M}_1(A)|=n_1+\cdots+n_N,
\end{equation}
which is the total number of nonzero elements in $A$ except the pivots. Then $\tilde g$ gives the number of pairs $\{\kappa_{i_k},\kappa_{j_l}\}$ in the inverse image of singular points of $C$, $S=\{p_1,\ldots,p_{\tilde g}\}$, where we identify the points $\pi^{-1}(p_j)=\{b_j,c_j\}$ for $1\le j\le \tilde g$ as
\begin{equation}\label{eq:bc}
\{b_{\tilde g_{k-1}+l}=\kappa_{i_k},~ c_{\tilde g_{k-1}+l}=\kappa_{j_l^{(k)}}\},\quad \text{for}\quad 1\le l\le n_k~\,\text{and}~\, 1\le k\le N.
\end{equation}
One should note here that the indices of $b_i$'s are given by $I_0$, the pivot indices, and those of $c_j$'s are given by the non-pivot indices.

We also define the following sets for $2\le n\le N$. For $J=\{j_1,\ldots,j_N\}\in\mathcal{M}_n(A)$, 
\[
P_n(A):=\left\{\sum_{i_k\in I_0\setminus J}\sum_{j_l\in J\setminus I_0}[i_k,j_l]:[i_k,j_l]\in P_1(A),~ J\in\mathcal{M}_n(A)\right\}
\]
Note that the sets $P_n(A)$ are generated by the elements in $P_1(A)$, and each index in the sum appears only once.
The number of elements $[i_k,j_l]$ in each sum in $P_n(A)$ is $n$, and we denote the element as
\[
J-I_0=\sum_{i_k\in I_0\setminus J}\sum_{j_l\in J\setminus I_0}[i_k,j_l].
\]
Note in particular that the sum is not unique, e.g. $[1,4]+[2,3]=[1,3]+[2,4]$ (see the example below).
We then define the linear dependence of the elements in $P_1(A)$.
\begin{definition}
The linear independence and the dimension of $P_1(A)$.
\begin{itemize}
\item[(a)] $P_1(A)$ is said to be linearly dependent if an element $[i_k,j_l]$ in $P_1(A)$ is expressed by
\[
[i_k,j_l]=\sum\epsilon_{i,j}[i,j],\quad\text{with}~ \epsilon_{i,j}\in\{0,\pm1\}~\text{and}~\sum|\epsilon_{i,j}|\ge 2,
\]
where the sum is over all elements in $P_1(A)$.
\item[(b)] The dimension of $P_1(A)$ is defined by
\[
\text{dim}\, P_1(A)=\{\text{number of independent elements in}~P_1(A)\}
\]
Note that $\text{dim}\,P_1(A)\le \tilde g$ defined in \eqref{eq:Tg}.
\end{itemize}
\end{definition}
Then we have the following lemma.
\begin{lemma}\label{lem:dimP1}
For $A\in \text{Gr}(N,M)$ (irreducible totally nonnegative element), we have
\[
N\le \text{dim}\,P_1(A)\,\le M-1.
\]
\end{lemma}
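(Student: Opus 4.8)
The plan is to translate the combinatorial quantity $\dim P_1(A)$ into the rank of a graph, and then read off both bounds from elementary properties of that graph that are forced by irreducibility of $A$. First I would fix the objects. Since $A$ is in RREF, the lexicographically minimal element $I_0=\{i_1<\cdots<i_N\}$ is exactly the set of pivot columns (with $i_1=1$), and the pivot columns are the standard basis vectors $e_1,\dots,e_N$. For a non-pivot column $j$ and a pivot $i_k$, the swap $J=(I_0\setminus\{i_k\})\cup\{j\}$ has minor $\Delta_J(A)=\pm a_{k,j}$ by cofactor expansion along the standard-basis columns; hence $J\in\mathcal{M}_1(A)$ exactly when $a_{k,j}\neq 0$, and the associated element of $P_1(A)$ is $[i_k,j]=\xi_j-\xi_{i_k}$. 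Thus $P_1(A)$ is in bijection with the nonzero non-pivot entries of $A$, consistent with the description following its definition, and $|P_1(A)|=\tilde g$ as in \eqref{eq:Tg}.

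Next I would set up the graph. Regard each $\xi_i$ as a formal generator of an $M$-dimensional space $V=\bigoplus_{i=1}^M\mathbb{R}\xi_i$, so that $[i,j]\mapsto \xi_j-\xi_i$; this is the correct reading of the $\{0,\pm1\}$-dependence in the Definition, and it is in fact forced on us, since the functions $x,y,t$ span only a $3$-dimensional space and a functional reading would give $\dim P_1(A)\le 3$, contradicting the statement for $N\ge 4$. Let $G$ be the bipartite graph on vertex set $[M]$, with parts $I_0$ (the $N$ pivots) and $[M]\setminus I_0$ (the $M-N$ non-pivots), whose edges are the pairs $\{i_k,j\}$ indexing $P_1(A)$. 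Under $[i,j]\mapsto \xi_j-\xi_i$, a $\{0,\pm1\}$-relation is exactly the signed incidence vector of a cycle of $G$, so $\{0,\pm1\}$-independence coincides with $\mathbb{R}$-independence of the edge vectors, and the number of independent elements is the graphic-matroid rank $\dim P_1(A)=M-c(G)$, where $c(G)$ is the number of connected components. Irreducibility enters here: condition (a) makes every pivot have degree $\ge 1$ and condition (b) makes every non-pivot have degree $\ge 1$, so $G$ has no isolated vertices and all $M$ columns genuinely occur as vertices.

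With $\dim P_1(A)=M-c(G)$ both inequalities become statements about $c(G)$. The upper bound is immediate because $c(G)\ge 1$, giving $\dim P_1(A)\le M-1$ (equivalently, every $[i,j]$ lies in the zero-sum hyperplane $\{\sum_i c_i\xi_i:\sum_i c_i=0\}\subset V$, of dimension $M-1$). For the lower bound, since $G$ is bipartite with no isolated vertices, every connected component contains at least one edge and hence at least one vertex on the non-pivot side; as there are only $M-N$ non-pivot vertices, $c(G)\le M-N$, whence $\dim P_1(A)=M-c(G)\ge N$. An alternative direct argument for the lower bound: choosing for each row $k$ one nonzero non-pivot entry yields $N$ edges with pairwise distinct pivot endpoints, so each $\xi_{i_k}$ appears in a single edge vector and its coefficient forces that edge's weight to vanish in any relation, making these $N$ elements independent.

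The step I expect to require the most care is the bridge in the second paragraph: pinning down that the $\{0,\pm1\}$-dependence of the Definition is to be read formally in $V$, and that this combinatorial rank equals $M-c(G)$. Concretely one must verify that every minimal $\{0,\pm1\}$-relation among the $[i,j]$ is a single cycle, and conversely that each cycle of $G$ produces such a relation by telescoping $[i,j]+[j,k]=[i,k]$. Granting this standard graphic-matroid fact, the remainder is bookkeeping with the pivot structure of the RREF and the two irreducibility conditions.
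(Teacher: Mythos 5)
Your proof is correct, but it takes a genuinely different route from the paper's, most visibly in the upper bound. The paper's lower bound is exactly your ``alternative direct argument'': irreducibility gives at least one nonzero non-pivot entry in each row, and the resulting $N$ elements $[i_k,j]$ involve distinct pivots, hence are independent (the paper leaves even this last step implicit). For the upper bound, however, the paper does not argue via connectivity or the zero-sum hyperplane: it asserts that the dimension is maximized when $A$ lies in the top cell, where $|\mathcal{M}_1(A)|=N(M-N)$, and then counts the $(N-1)(M-N-1)$ square relations $[k,N+i]+[k+1,N+i+1]=[k,N+i+1]+[k+1,N+i]$ coming from adjacent $2\times 2$ blocks, obtaining $N(M-N)-(N-1)(M-N-1)=M-1$. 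That computation presupposes both that the top cell maximizes the dimension (true, since any $A$'s edge set sits inside the complete bipartite graph on $I_0$ and its complement, but unproved there) and that the listed relations are independent and exhaust all relations, i.e.\ that they form a basis of the cycle space of $K_{N,M-N}$ --- also unproved. Your identification $\dim P_1(A)=M-c(G)$ via the graphic matroid settles these points uniformly: the upper bound is just $c(G)\ge 1$, the lower bound is $c(G)\le M-N$ from bipartiteness plus the absence of isolated vertices (which is precisely where both irreducibility conditions (a) and (b) enter), and as a bonus you obtain the exact value of $\dim P_1(A)$ for every irreducible $A$, not merely the two bounds. Your preliminary observation that the $\{0,\pm 1\}$-dependence of the Definition must be read formally in $\bigoplus_i\mathbb{R}\xi_i$ (rather than as functions of $(x,y,t)$, which would cap the dimension at $3$) is also a point the paper glosses over, and the graphic-matroid fact that minimal relations are cycles is what makes that reading consistent. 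What the paper's computation buys in exchange is the explicit statement that the bound $M-1$ is attained at the top cell, which your formula recovers at once since $K_{N,M-N}$ is connected.
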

\begin{proof}
Since $A$ is irreducible, there is at least one $[i_k,j]$ for each row. This implies $N\le \text{dim}\,P_1(A)$.
The maximum dimension is given by the case when $A$ is in the top cell (largest dimensional cell in $\text{Gr}(N,M)$.
In this case, we have $|\mathcal{M}_1(A)|=N(M-N)$. Each dependent relation is corresponding to the $2\times 2$ matrix,
\[
\begin{pmatrix}
a_{k,N+i}& a_{k,N+i+1}\\
a_{k+1,N+i}&a_{k+1,N+i+1}
\end{pmatrix},\qquad \text{for}\quad 1\le i\le M-N-1~\text{and}~~1\le k\le N-1
\]
that is, $[k,N+i]+[k+1,N+i+1]=[k,N+i+1]+[k+1,N+i]$, which give the total $(N-1)(M-N-1)$ independent relations.
Then we have 
\[
\text{dim}\,P_1(A)=N(M-N)-(N-1)(M-N-1)=M-1,
\]
which gives the maximum dimension.
\end{proof}

\begin{example}
Consider the element of the top positive Grassmannian cell $\text{Gr}(2,4)$, which is
\[
A=\begin{pmatrix}
1 &0 & -a & -b\\
0&1&c & d
\end{pmatrix}
\]
where $a,b,c,d>0$ and $cb-ad>0$. Then we have $\mathcal{M}(A)=\mathcal{M}_0(A)\cup\mathcal{M}_1(A)\cup\mathcal{M}_2(A)$ where
\begin{align*}
\mathcal{M}_0(A)=\{(1,2)\},\quad
\mathcal{M}_1(A)=\{(1,3),~(1,4),~(2,3),~(2,4)\},\quad
\mathcal{M}_2(A)=\{(3,4)\}.
\end{align*}
This gives
\begin{align*}
P_1(A)&=\{[1,3],~[1,4],~[2,3],~[2,4]\},\\
P_2(A)&=\{[1,3]+[2,4]=[1,4]+[2,3]\},
\end{align*}
which implies that the set $P_1(A)$ is linearly dependent and
$\text{dim}\,P_1(A)=3<\tilde g=|\mathcal{M}_1(A)|=4$
\end{example}

%%%%%%%%%%%%%%%%%%%%%%%%%%%%%
We write the $\tau$-function \eqref{eq:tauPl} in the following form,
\begin{align}\label{eq:tauNormal}
\tau(x,y,t)&=\sum_{n=0}^N\sum_{J\in\mathcal{M}_n(A)}\Delta_J(A)E_J(x,y,t)\\
&=\Delta_{I_0}(A)E_{I_0}(x,y,t)\left(1+\sum_{n=1}^N\sum_{J\in\mathcal{M}_n(A)}\frac{\Delta_J(A)E_J}{\Delta_{I_0}(A)E_{I_0}}\right)\nonumber
\end{align}
Since the solution is given by the second derivative of $\ln\tau$, one can take the $\tau$-function in the following form,
\begin{equation}\label{eq:grammian}
\tau(x,y,t)=1+\sum_{n=1}^N\sum_{J\in\mathcal{M}_n(A)}\Delta_J(A)\frac{E_J}{E_{I_0}}.
\end{equation}
where we have taken $\Delta_{I_0}(A)=1$ for the pivot set $I_0$.
For each $i_k\in I_0$ and $j^{(k)}_l\in J$ of $J\in\mathcal{M}_1(A)$ (see \eqref{eq:phi} for the notation), we first note from \eqref{eq:EI} that
\[
\frac{E_J}{E_{I_0}}=\frac{\prod_{m<n}(\kappa_{j_n}-\kappa_{j_m})}{\prod_{m<n}(\kappa_{i_n}-\kappa_{i_m})}\exp({\xi_{j^{(k)}_l}-\xi_{i_k}})
=\frac{\prod_{m\ne k}|\kappa_{i_m}-\kappa_{j^{(k)}_l}|}{\prod_{m\ne k}|\kappa_{i_m}-\kappa_{i_k}|}\exp({\phi_{\tilde g_{k-1}+l}}).
\]
Then we define a constant phase parameter $\phi_j^{(0)}$ for $j=1,\ldots, \tilde g$,
\begin{equation*}\label{eq:phi0}
\exp(\phi_{\tilde g_{k-1}+l}^0):=\Delta_J(A)\frac{\prod_{m\ne k}|\kappa_{i_m}-\kappa_{j^{(k)}_l}|}{\prod_{m\ne k}|\kappa_{i_m}-\kappa_{i_k}|},\quad 1\le l\le n_k.
\end{equation*}
Note here that each nonzero entry in the matrix $A$ is represented in the phase function $\phi_m^0$ for $1\le m\le \tilde g:=|\mathcal{M}_1(A)|$, and we have $\Delta_J(A)=|a_{k,j_l^{(k)}}|$, i.e.
\begin{equation}\label{eq:aij}
|a_{k,j_l^{(k)}}|=\frac{\prod_{m\ne k}|\kappa_{i_m}-\kappa_{i_k}|}{\prod_{m\ne k}|\kappa_{i_m}-\kappa_{j_l^{(k)}}|}\exp(\phi^0_{\tilde g_{k-1}+l}).
\end{equation}
Then the first term in the sum in \eqref{eq:grammian} is given by
\[
\sum_{J\in\mathcal{M}_1(A)}\Delta_J(A)\frac{E_J}{E_{I_0}}=\sum_{k=1}^N\sum_{l=1}^{n_k}\exp(\tilde\phi_{\tilde g_{k-1}+l})=\sum_{m=1}^{\tilde g}e^{\tilde\phi_m},
\]
where $\tilde\phi_m:=\phi_m+\phi^0_m$.

For the sum over $J\in\mathcal{M}_2(A)$,
let $\{i_k,i_l\}\in I_0\setminus J$ and $\{j_m^{(k)},j_n^{(l)}\}\in J\setminus I_0$. Then the following lemma is immediate
\begin{lemma}\label{lem:EJ2}
\[
\frac{E_J}{E_{I_0}}=\frac{\prod_{s\ne k,l}|\kappa_{i_s}-\kappa_{j_m^{(k)}}||\kappa_{i_s}-\kappa_{j_n^{(l)}}||\kappa_{j_m^{(k)}}-\kappa_{j_n^{(l)}}|}{\prod_{s\ne k,l}|\kappa_{i_s}-\kappa_{i_k}||\kappa_{i_s}-\kappa_{i_l}||\kappa_{i_k}-\kappa_{i_l}|}\exp({\phi_{\tilde g_{k-1}+m}+\phi_{\tilde g_{l-1}+n}}),
\]
\end{lemma}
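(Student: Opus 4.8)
The plan is to evaluate $E_J$ and $E_{I_0}$ directly from the explicit formula \eqref{eq:EI} and form their quotient, exactly mirroring the $n=1$ computation carried out just before the statement. Writing $I_0=\{i_1<\cdots<i_N\}$, the index set $J\in\mathcal{M}_2(A)$ is obtained from $I_0$ by replacing the two pivots $i_k,i_l$ with the two non-pivot indices $j_m^{(k)},j_n^{(l)}$, so that $J=(I_0\setminus\{i_k,i_l\})\cup\{j_m^{(k)},j_n^{(l)}\}$. Accordingly, $E_J/E_{I_0}$ factors into two independent pieces: the ratio of the exponential parts $\exp(\sum_{j\in J}\xi_j)/\exp(\sum_{i\in I_0}\xi_i)$, and the ratio of the two Vandermonde-type products $\prod_{p<q}(\kappa_{\bullet}-\kappa_{\bullet})$ appearing in \eqref{eq:EI}.

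For the exponential part, all indices common to $I_0$ and $J$ cancel, leaving only the contribution of the replaced indices, namely $\exp(\xi_{j_m^{(k)}}+\xi_{j_n^{(l)}}-\xi_{i_k}-\xi_{i_l})$. By the definition \eqref{eq:phi} this equals $\exp(\phi_{\tilde g_{k-1}+m}+\phi_{\tilde g_{l-1}+n})$, which is precisely the exponential factor in the statement. For the Vandermonde part, I would introduce the common subset $S=\{i_s:s\ne k,l\}$ shared by $I_0$ and $J$, and split each product according to whether a pair lies entirely within $S$, couples an element of $S$ with one of the two distinguished indices, or consists of the two distinguished indices themselves. The factor coming from pairs inside $S$ is identical for $I_0$ and $J$ and cancels, leaving exactly the ratio of products displayed in Lemma \ref{lem:EJ2}.

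The only point requiring care, and the one I would flag as the main (though minor) obstacle, is the sign bookkeeping in the Vandermonde products: after the substitution, $J$ must be re-sorted into increasing order, which a priori permutes the factors $(\kappa_q-\kappa_p)$. This turns out to be harmless. The re-ordering of the exponential product $E_{j_1}\cdots E_{j_N}$ is immaterial since the $E_j$ commute, and by the ordering \eqref{eq:orderK} every difference $\kappa_q-\kappa_p$ with $p<q$ is positive, so each Vandermonde product is positive and each individual factor coincides with its absolute value. Hence no sign tracking is needed and the factorization collapses directly to the absolute-value expression claimed, establishing the lemma.
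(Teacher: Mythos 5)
Your proposal is correct and is essentially the paper's own argument: the paper states this lemma as ``immediate'' after carrying out the analogous $\mathcal{M}_1(A)$ computation from \eqref{eq:EI}, and your proof simply writes out that same direct calculation — cancel the common exponentials, cancel the Vandermonde factors from pairs inside $I_0\cap J$, and use the ordering \eqref{eq:orderK} to replace each positive factor by its absolute value. Your explicit handling of the re-sorting/sign issue is a fair point of care, but it resolves exactly as you say and introduces nothing beyond the paper's intended reasoning.
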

Now let $a_{k,j_m^{(k)}}$ and $a_{l,j_n^{(l)}}$ be the entries of $A$ at the locations $(k,j_m^{(k)})$ and $(l,j_n^{(l)})$, respectively. 

\begin{proposition}\label{prop:Aij}
\begin{align*}
|a_{k,j_m^{(k)}}a_{l,j_n^{(l)}}|\frac{E_J}{E_{I_0}}&=\frac{|\kappa_{i_k}-\kappa_{i_l}||\kappa_{j_m^{(k)}}-\kappa_{j_n^{(l)}}|}{|\kappa_{i_k}-\kappa_{j_n^{(l)}}||\kappa_{j_m^{(k)}}-\kappa_{i_l}|}\exp(\tilde\phi_{\tilde g_{k-1}+m}+\tilde\phi_{\tilde g_{l-1}+n}),
\end{align*}
where $\tilde\phi_m=\phi_m+\phi_m^0$.
\end{proposition}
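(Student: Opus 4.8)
The plan is to prove this as a direct algebraic simplification, combining the expression \eqref{eq:aij} for the magnitudes of the matrix entries with the ratio $E_J/E_{I_0}$ furnished by Lemma \ref{lem:EJ2}. Since everything on both sides is manifestly a product of $\kappa$-differences times an exponential, the whole content is careful bookkeeping of the index sets over which the $\kappa$-products run; I expect no genuine conceptual obstacle, only the risk of an index slip in the cancellation.

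First I would record the two entry magnitudes from \eqref{eq:aij}, written with a common dummy index $s$ ranging over the pivot set $\{1,\dots,N\}$:
\[
|a_{k,j_m^{(k)}}|=e^{\phi^0_{\tilde g_{k-1}+m}}\,\frac{\prod_{s\ne k}|\kappa_{i_s}-\kappa_{i_k}|}{\prod_{s\ne k}|\kappa_{i_s}-\kappa_{j_m^{(k)}}|},\qquad
|a_{l,j_n^{(l)}}|=e^{\phi^0_{\tilde g_{l-1}+n}}\,\frac{\prod_{s\ne l}|\kappa_{i_s}-\kappa_{i_l}|}{\prod_{s\ne l}|\kappa_{i_s}-\kappa_{j_n^{(l)}}|}.
\]
Multiplying these two by the expression in Lemma \ref{lem:EJ2} and grouping $e^{\phi^0_{\tilde g_{k-1}+m}}e^{\phi_{\tilde g_{k-1}+m}}=e^{\tilde\phi_{\tilde g_{k-1}+m}}$ (and likewise for the $(l,n)$ pair) via $\tilde\phi=\phi+\phi^0$, the exponential part of the right-hand side appears immediately. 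The only remaining task is to simplify the product of the rational prefactors.

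The key step is the cancellation, and it is where I would be most careful. Using $k\ne l$, I would split each product over the larger index set into the product over $s\ne k,l$ times the isolated term, i.e. $\prod_{s\ne k}=(\text{term at }s=l)\cdot\prod_{s\ne k,l}$ and $\prod_{s\ne l}=(\text{term at }s=k)\cdot\prod_{s\ne k,l}$. After this split, each product $\prod_{s\ne k,l}|\kappa_{i_s}-\kappa_{i_k}|$, $\prod_{s\ne k,l}|\kappa_{i_s}-\kappa_{i_l}|$, $\prod_{s\ne k,l}|\kappa_{i_s}-\kappa_{j_m^{(k)}}|$, $\prod_{s\ne k,l}|\kappa_{i_s}-\kappa_{j_n^{(l)}}|$ coming from the two entries cancels against the matching product in Lemma \ref{lem:EJ2} (here I would note that the products in Lemma \ref{lem:EJ2} are read as applying only to the $s$-dependent factors, with $|\kappa_{j_m^{(k)}}-\kappa_{j_n^{(l)}}|$ and $|\kappa_{i_k}-\kappa_{i_l}|$ standing outside). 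What survives is, from the pivot–pivot factors, $|\kappa_{i_l}-\kappa_{i_k}|\,|\kappa_{i_k}-\kappa_{i_l}|$ in the numerator against $|\kappa_{i_k}-\kappa_{i_l}|$ in the denominator, leaving a single $|\kappa_{i_k}-\kappa_{i_l}|$; and, from the mixed factors, the two isolated terms $|\kappa_{i_l}-\kappa_{j_m^{(k)}}|$ and $|\kappa_{i_k}-\kappa_{j_n^{(l)}}|$ landing in the denominator while the factor $|\kappa_{j_m^{(k)}}-\kappa_{j_n^{(l)}}|$ from Lemma \ref{lem:EJ2} stays in the numerator. Assembling these and using $|\kappa_{i_l}-\kappa_{j_m^{(k)}}|=|\kappa_{j_m^{(k)}}-\kappa_{i_l}|$ gives exactly the claimed prefactor, which together with the exponential $\exp(\tilde\phi_{\tilde g_{k-1}+m}+\tilde\phi_{\tilde g_{l-1}+n})$ completes the proof.
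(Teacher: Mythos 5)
Your proposal is correct and follows essentially the same route as the paper: the paper's proof also invokes \eqref{eq:aij} for the two entries (justified there by noting that the sets $J_k=(J\setminus\{j_m^{(k)}\})\cup\{i_k\}$ and $J_l=(J\setminus\{j_n^{(l)}\})\cup\{i_l\}$ lie in $\mathcal{M}_1(A)$ with minors $|a_{k,j_m^{(k)}}|$ and $|a_{l,j_n^{(l)}}|$) and then combines it with Lemma \ref{lem:EJ2}. The only difference is that you spell out the cancellation of the $\prod_{s\ne k,l}$ factors explicitly, which the paper leaves implicit.
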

\begin{proof}
Let $J_k:=(J\setminus\{j_m^{(k)}\})\cup\{i_k\}$ and $J_l:=(J\setminus\{j_n^{(l)}\})\cup\{i_l\}$. Note that $J_k,J_l\in\mathcal{M}_1(A)$, and we have
\[
\Delta_{J_k}(A)=|a_{k,j_m^{(k)}}|,\quad\text{and}\quad \Delta_{J_l}=|a_{l,j_n^{(l)}}|.
\]
Then using \eqref{eq:aij}, we have the formula.
\end{proof}
From \eqref{eq:bc}, the coefficient of the exponential function gives
\[
\frac{(\kappa_{i_k}-\kappa_{i_l})(\kappa_{j_m^{(k)}}-\kappa_{j_n^{(l)}})}{(\kappa_{i_k}-\kappa_{j_n^{(l)}})(\kappa_{j_m^{(k)}}-\kappa_{i_l})}=\exp\left(2\pi i\,\tilde\Omega_{g_{k-1}+m,g_{l-1}+n}\right)
\]
One should note that for $\{j_m^{(k)},j_n^{(l)}\}\in J\setminus I_0$, we have
\[
\Delta_J(A)=\Big|a_{k,j_m^{(k)}}a_{l,j_n^{(l)}}-a_{k,j_{n'}^{(k)}}a_{l,j_{m'}^{(l)}}\Big|,
\]
where $j_{n'}^{(k)}=j_n^{(l)}$ and $j_{m'}^{(l)}=j_m^{(k)}$.  If $a_{k,j_{n'}^{(k)}}a_{l,j_{m'}^{(l)}}\ne0$, we have the phases
\[
\phi_{\tilde g_{k-1}+n'}=[i_k,j_{n'}^{(k)}],\quad\text{and}\quad \phi_{\tilde g_{l-1}+m'}=[i_l,j_{m'}^{(l)}].
\]
Note here that we have a resonant condition,
\[
\phi_{\tilde g_{k-1}+n'}+\phi_{\tilde g_{l-1}+m'}=\phi_{\tilde g_{k-1}+m}+\phi_{\tilde g_{l-1}+n}.
\]
Then we have
\begin{align*}
\Delta_J(A)\frac{E_J}{E_{I_0}}=&\pm(a_{k,j_m^{(k)}}a_{l,j_n^{(l)}}-a_{k,j_{n'}^{(k)}}a_{l,j_{m'}^{(l)}})\frac{E_J}{E_{I_0}}\\
=&\pm\left(\frac{(\kappa_{i_k}-\kappa_{i_l})(\kappa_{j_m^{(k)}}-\kappa_{j_n^{(l)}})}{(\kappa_{i_k}-\kappa_{j_n^{(l)}})(\kappa_{j_m^{(k)}}-\kappa_{i_l})}e^{\tilde\phi_{\tilde g_{k-1}+m}+\tilde\phi_{\tilde g_{l-1}+n}}\right.\\
&\qquad \left.-\frac{(\kappa_{i_k}-\kappa_{i_l})(\kappa_{j_{n'}^{(k)}}-\kappa_{j_{m'}^{(l)}})}{(\kappa_{i_k}-\kappa_{j_{m'}^{(l)}})(\kappa_{j_{n'}^{(k)}}-\kappa_{i_l})}e^{\tilde\phi_{\tilde g_{k-1}+n'}+\tilde\phi_{\tilde g_{l-1}+m'}}\right)
\end{align*}

Repeating the similar computations, we have the formula for $J\in\mathcal{M}_n(A)$.
\begin{proposition}
Let $\{i_k, i_l,\ldots,i_m\}\in I_0\setminus J$ and $\{j_r^{(k)}, j_s^{(l)}, \ldots, j_t^{(m)}\}\in J\setminus I_0$. Then we have
\[
\left| a_{k,j_r^{(k)}}\,a_{l,j_s^{(l)}}\cdots a_{m,j_t^{(m)}}\right|\frac{E_J}{E_{I_0}}=\left|C_{\tilde g_{k-1}+r,\, \tilde g_{l-1}+s,\, \ldots, \,\tilde g_{m-1}+t}\right|e^{\tilde \phi_{\tilde g_{k-1}+r}+\tilde\phi_{\tilde g_{l-1}+s}+\cdots+\tilde \phi_{\tilde g_{m-1}+t}},
\]
where $C_{k,l,\ldots,m}$ is defined by
\[
C_{k,l,\ldots,m}=\prod_{a<b\in\{k,l,\ldots,m\}}C_{a,b}=\exp 2\pi i\left(\sum_{a<b\in\{k,l,\ldots,m\}}\tilde\Omega_{a,b}\right),
\]
with
\[
C_{\tilde g_{k-1}+r,\,\tilde g_{l-1}+s}:=\frac{(\kappa_{i_k}-\kappa_{i_l})(\kappa_{j_r^{(k)}}-\kappa_{j_s^{(l)}})}{(\kappa_{i_k}-\kappa_{j_s^{(l)}})(\kappa_{j_r^{(k)}}-\kappa_{i_l})}=\exp 2\pi i\,\tilde\Omega_{\tilde g_{k-1}+r,\tilde g_{l-1}+s}.
\]
\end{proposition}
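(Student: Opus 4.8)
The plan is to prove the formula by a direct computation that generalizes the bookkeeping in the proof of Proposition~\ref{prop:Aij} from two replaced pivots to $n$ of them; equivalently one can run an induction on $n$ with Proposition~\ref{prop:Aij} as the base case, peeling off one pivot at a time. Let $R$ denote the set of rows in which $I_0$ and $J$ differ, so that $I_0\setminus J=\{i_k,\ldots,i_m\}$ and $J\setminus I_0=\{j_r^{(k)},\ldots,j_t^{(m)}\}$ with $|R|=n$, and abbreviate $b_a:=\kappa_{i_k}$ and $c_a:=\kappa_{j_r^{(k)}}$ for the pair index $a=\tilde g_{k-1}+r$, in agreement with \eqref{eq:bc}. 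The remaining rows $\bar R=[N]\setminus R$ index the common (spectator) values $s_p:=\kappa_{i_p}$ for $p\notin R$.

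First I would expand $E_J/E_{I_0}$ using \eqref{eq:EI}. Because $\kappa_1<\cdots<\kappa_M$ and $i_q>i_p$ forces $\kappa_{i_q}>\kappa_{i_p}$, every Vandermonde factor is positive, so $E_J/E_{I_0}$ is a positive multiple of $\exp(\sum_{k\in R}\phi_{\tilde g_{k-1}+r})$ and I may write its coefficient with absolute values throughout. Splitting the two Vandermonde products according to whether a pair of indices lies in $\bar R$, straddles $\bar R$ and $R$, or lies in $R$, the spectator--spectator factors are common to $J$ and $I_0$ and cancel, leaving
\[
\frac{E_J}{E_{I_0}}=\frac{\prod_{p\notin R}\prod_{k\in R}|s_p-c_k|}{\prod_{p\notin R}\prod_{k\in R}|s_p-b_k|}\,\frac{\prod_{\{k,l\}\subset R}|c_k-c_l|}{\prod_{\{k,l\}\subset R}|b_k-b_l|}\,\exp\Bigl(\sum_{k\in R}\phi_{\tilde g_{k-1}+r}\Bigr).
\]

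Next I would bring in the matrix entries through \eqref{eq:aij}, splitting each product $\prod_{p\ne k}|\kappa_{i_p}-\cdot|$ in the same spectator/replaced fashion. The spectator part of $\prod_{k\in R}|a_{k,j_r^{(k)}}|$ is then exactly the reciprocal of the spectator cross-terms above, so multiplying the two expressions cancels every factor involving an $s_p$. What survives involves only the replaced values $\{b_k,c_k\}_{k\in R}$, and collecting unordered pairs I expect it to reorganize as
\[
\frac{\prod_{\{k,l\}\subset R}|b_k-b_l|\,|c_k-c_l|}{\prod_{\{k,l\}\subset R}|b_k-c_l|\,|c_k-b_l|}=\prod_{\{k,l\}\subset R}\bigl|C_{\tilde g_{k-1}+r,\,\tilde g_{l-1}+s}\bigr|=\bigl|C_{\tilde g_{k-1}+r,\ldots,\tilde g_{m-1}+t}\bigr|,
\]
using the definition of $C_{a,b}$ together with $C_{k,\ldots,m}=\prod_{a<b}C_{a,b}$, while the phases assemble as $\sum_{k\in R}(\phi_{\tilde g_{k-1}+r}+\phi^0_{\tilde g_{k-1}+r})=\sum_{k\in R}\tilde\phi_{\tilde g_{k-1}+r}$. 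By \eqref{eq:Omega} and \eqref{eq:bc} each $C_{a,b}$ equals $\exp(2\pi i\,\tilde\Omega_{a,b})$, which closes the link with the $M$-theta function.

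The only genuinely delicate point is the passage to absolute values: I must confirm that $E_J/E_{I_0}$ is positive and that each $a_{k,j_r^{(k)}}$ enters precisely through $|a_{k,j_r^{(k)}}|$ as in \eqref{eq:aij}. The first follows from the ordering of the $\kappa_j$; the second is where total nonnegativity of $A$ is used, ensuring the sign conventions remain consistent as the pair index runs over $\mathcal{M}_1(A)$. Everything else is the pairwise cancellation of spectator factors, which becomes routine once the indices are partitioned by $R$ versus $\bar R$.
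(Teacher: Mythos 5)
Your proposal is correct and follows essentially the same route as the paper: the paper's own justification is the one-line remark ``repeating the similar computations,'' meaning exactly the generalization of Lemma~\ref{lem:EJ2} and Proposition~\ref{prop:Aij} from two replaced pivots to $n$, which is what you carry out — expand $E_J/E_{I_0}$ via \eqref{eq:EI}, split the Vandermonde factors by spectator versus replaced rows, and cancel the spectator factors against those in $\prod_k|a_{k,j_r^{(k)}}|$ from \eqref{eq:aij}, leaving the product of pairwise cross-ratios $\prod_{a<b}|C_{a,b}|$ and the phases $\sum\tilde\phi$. Your write-up in fact supplies the bookkeeping (the partition into $R$ and $\bar R$, the positivity of all Vandermonde factors under the ordering \eqref{eq:orderK}, and the double-counting identity turning $\prod_{k}\prod_{l\ne k}$ into squared unordered-pair products) that the paper leaves implicit, so no gap remains.
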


%%%%%%%%%%%%%%%%%%%%
We now state the main theorem as the summary of the results above.
\begin{theorem}\label{thm:main}
For $A\in\text{Gr}(N,M)$ ($A$ is in the RREF), let $\tilde g$ be the total number of nonzero elements in $A$ except the pivot one's.
Let $n_k$ denote the total number of nonzero elements in the $k$-th row, and define
\[
\tilde g_k=n_1+\cdots+n_k,\quad (1\le k\le N),\qquad \text{and}\quad \tilde g=\tilde g_N.
\]
Also let $j^{(k)}_m$ denote the non-pivot column index of a nonzero element in the $k$-th row of $A$ (note $1\le m\le n_k$),
and assume that $j_m^{(k)}<j_l^{(k)}$ if $m<l$.

Then the $\tau$-function $\tau_A(x,y,t)$ in \eqref{eq:tauNormal} can be expressed by the (degenerated) $M$-theta function
$\tilde\vartheta_{\tilde g}^{(\tilde g)}(\z;\tilde\Omega)$,
\[
\tau_A(x,y,t)/E_{I_0}(x,y,t)=\tilde\vartheta^{(\tilde g)}_{\tilde g}(\z;\tilde\Omega)=\sum_{m\in\{0,1\}}^{\tilde g}\exp 2\pi i\left(\sum_{i<j}m_im_j\tilde\Omega_{i,j}+\sum_{j=1}^{\tilde g}m_jz_j\right),
\]
where $E_{I_0}$ is the $N\times N$ determinant in \eqref{eq:EI} with the lexicographically minimal set $I=I_0$, $\tilde g=\tilde g_N$ and  $2\pi i \z=\tilde\phi(x,y,t)$ with $\tilde\phi_n=\phi_n+\phi^0_n$ for $1\le n\le \tilde g$. We then have for $1\le k\le N$
\begin{align*}
&\phi_{\tilde g_{k-1}+m}=[i_k,j^{(k)}_m]=(\kappa_{j_m^{(k)}}-\kappa_{i_k})x+(\kappa_{j_m^{(k)}}^2-\kappa^2_{i_k})y+(\kappa^3_{j_m^{(k)}}-\kappa^3_{i_k})t,\\
&\exp(\phi^0_{\tilde g_{k-1}+m})=| a_{k,j_m^{(k)}}|\frac{\prod_{l\ne k}|\kappa_{i_l}-\kappa_{j^{(k)}_m}|}{\prod_{l\ne k}|\kappa_{i_l}-\kappa_{i_k}|},\\
&\exp\left(2\pi i\,\tilde\Omega_{g_{k-1}+m,g_{l-1}+n}\right)=\frac{(\kappa_{i_k}-\kappa_{i_l})(\kappa_{j_m^{(k)}}-\kappa_{j_n^{(l)}})}{(\kappa_{i_k}-\kappa_{j_n^{(l)}})(\kappa_{j_m^{(k)}}-\kappa_{i_l})}.
\end{align*}
Here $a_{k,j^{(k)}_m}$ is the element in $A$ with the corresponding index, and assume $\tilde g_0=0$.
\end{theorem}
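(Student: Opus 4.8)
The plan is to prove the identity by expanding both sides into monomials in $e^{\xi}$ and matching them, relying on the coefficient formulas established in Proposition \ref{prop:Aij} and the final proposition. The starting point is the normalized $\tau$-function \eqref{eq:grammian}, whose constant term $1$ already matches the $\mathbf{m}=\mathbf{0}$ contribution to $\tilde\vartheta^{(\tilde g)}_{\tilde g}$. The bulk of the argument is to show that, for each $n$, the block $\sum_{J\in\mathcal{M}_n(A)}\Delta_J(A)E_J/E_{I_0}$ reproduces exactly the sum of theta monomials indexed by $\mathbf{m}\in\{0,1\}^{\tilde g}$ with $|\mathbf{m}|=n$.

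First I would fix the dictionary between entries of $A$ and lattice points. Under \eqref{eq:bc} each of the $\tilde g$ nonzero non-pivot entries $a_{k,j^{(k)}_l}$ is labelled by a single index, with $b$-coordinate the pivot $\kappa_{i_k}$ and $c$-coordinate the column $\kappa_{j^{(k)}_l}$; a vector $\mathbf{m}\in\{0,1\}^{\tilde g}$ then selects a set of such entries. I would next observe that the theta monomials whose support contains two entries in the same pivot row, or in the same non-pivot column, vanish identically: by \eqref{eq:Omega} the factor $\exp(2\pi i\,\tilde\Omega_{i,j})=\frac{(b_i-b_j)(c_i-c_j)}{(b_i-c_j)(c_i-b_j)}$ carries $b_i=b_j$ (same row) or $c_i=c_j$ (same column) into a vanishing numerator. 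Hence only the \emph{transversals} — selections of entries occupying distinct rows and distinct columns — survive, and the $2^{\tilde g}$-term theta sum collapses onto these.

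Next I would expand each minor and group by dynamical phase. For $J\in\mathcal{M}_n(A)$ the RREF structure reduces $\Delta_J(A)$ to an $n\times n$ determinant in the rows indexed by $I_0\setminus J$ and the columns $J\setminus I_0$, i.e. a signed sum of products of $n$ entries, one for each transversal using the rows $I_0\setminus J$ and the columns $J\setminus I_0$. Every such transversal carries the \emph{same} dynamical phase $\xi_J-\xi_{I_0}$, the resonance recorded after Proposition \ref{prop:Aij}, so all these theta monomials fall into a single group attached to $J$; conversely every surviving $\mathbf{m}$ lands in exactly one such group. Applying Proposition \ref{prop:Aij} (the case $n=2$) and its inductive generalization, each transversal term $|a_{k,j^{(k)}_r}\cdots|\,E_J/E_{I_0}$ equals the corresponding theta monomial $|C_{\cdots}|\exp(\sum\tilde\phi)$, and summing over the pairings rebuilds the full signed determinant $\Delta_J(A)\,E_J/E_{I_0}$. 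Transversals producing an index $J$ with $\Delta_J(A)=0$ (so $J\notin\mathcal{M}(A)$) contribute a vanishing Pl\"ucker coordinate and drop out.

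The hard part is the sign bookkeeping in this last step. One must check that the product $\prod_{a<b}C_{a,b}$ of signed period factors over the $\binom{n}{2}$ pairs of a transversal, combined with the entry signs hidden inside the $\phi^0$'s, reproduces exactly the sign of that transversal in the Laplace expansion of $\Delta_J(A)$. Transposing two chosen columns flips the factor $(\kappa_{j^{(k)}_r}-\kappa_{j^{(l)}_s})$ in $C$, matching the sign change of a determinant under a column swap, so once the $2\times 2$ compatibility of Proposition \ref{prop:Aij} is checked against the ordering \eqref{eq:orderK} and the pivot/non-pivot assignment \eqref{eq:bc}, the multiplicativity $C_{k,l,\ldots,m}=\prod_{a<b}C_{a,b}$ from the final proposition propagates it to all $n$ by induction. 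Collecting the contributions over all $n$ and all $J$ then yields $\tau_A/E_{I_0}=\tilde\vartheta^{(\tilde g)}_{\tilde g}(\z;\tilde\Omega)$ with the stated data, completing the proof.
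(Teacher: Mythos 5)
Your proposal is correct and takes essentially the same route as the paper, which presents Theorem \ref{thm:main} as a ``summary of the results above'' --- the matroid decomposition \eqref{eq:decompM}, the normalization \eqref{eq:grammian}, the phase/coefficient identifications, Proposition \ref{prop:Aij} and its generalization to $\mathcal{M}_n(A)$ --- and these are exactly the ingredients you assemble, including the resonance mechanism by which the signed transversal sum rebuilds $\Delta_J(A)$. The one step you spell out that the paper leaves implicit (it appears only in the $\text{Gr}(2,4)$ example as the extra degenerations $\tilde\Omega_{k,l}\to+i\infty$) is that theta monomials selecting two entries in the same pivot row or the same non-pivot column vanish because \eqref{eq:bc} forces $b_i=b_j$ or $c_i=c_j$ in \eqref{eq:Omega}, so the $2^{\tilde g}$-term sum genuinely collapses onto transversals; this, together with your explicit flagging of the sign bookkeeping the paper hides in absolute values, makes your write-up a more complete account of the same argument.
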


\begin{remark}
Having the variables, $\{\tilde\phi_j, \tilde\Omega_{j,k}:1\le j<k\le \tilde g\}$ from the matrix $A\in\text{Gr}(N,M)$ and the parameters $(\kappa_1,\ldots,\kappa_{M})$, the $\tau$-function can be given in the form with the vertex operators \eqref{eq:Ltheta}. This formula has been also obtained by Nakayashiki \cite{N:23}, where he has constructed the $\tau$-function by
taking several limits of the parameters of the vertex operators for the Hirota type solutions.
\end{remark}

\begin{remark}
From Remark \ref{rem:Grammian}, Theorem \ref{thm:main} implies that the Wronskian form of the $\tau$-function can be also
written in the Grammian form which is the $M$-theta function \cite{CLM:10}.
\end{remark}

\begin{remark}\label{rem:BKP}
In the Pfaffian formula for the Prym $H$-theta function, we take
\[
\xi_j=\sum_{k=\text{odd}}q_j^kt_k=\left\{\begin{array}{lll}
\sum_{k=\text{odd}}b_n^kt_k,&\quad\text{if}~j=2n-1,\\
\sum_{k=\text{odd}}c_n^kt_k,&\quad\text{if}~j=2n.
\end{array}\right.\quad\text{for}\quad 1\le j\le 2g.
\]
(The time variables can include $t_{-1}$ \cite{H:04}.)
Then the element of the skew symmetric matrix $Q$ in \eqref{eq:Btau}, $\tilde\vartheta^B_g(z;\tilde\Omega^B)=\text{Pf}(Q)$ can be expressed by
\[
Q_{j,k}=c_{j,k}+\int^x D_1 e^{\xi_j}\cdot e^{\xi_k}\,dx,
\]
where $D_1$ is the Hirota derivative with respect to $t_1$, and the $\tau$-function \eqref{eq:Btau} gives a solution of the BKP equation \cite{H:04}, 
\[
(D_1^6-5D_1^3D_3-5D_3^2+9D_1D_5)\tau\cdot\tau=0.
\]
In a future communication, we plan to prove a classification theorem for the regular BKP solitons similar to Theorem \ref{thm:main}.
\end{remark}
%%%%%%%%%%%%%%%%%%%%%%%%%%%%%%%%%

\section{Examples}\label{sec:Example}
\subsection{Hirota type soliton solutions}
First note that the $M$-theta function \eqref{eq:Ltheta} gives
the Hirota type $g$-soliton solution \cite{H:04} by choosing the following variables with $\{b_j,c_j\}=\pi^{-1}(p_j)$,
\[
\tilde\phi_j=(c_j-b_j)x+(c_j^2-b_j^2)y+(c_j^3-b_j^3)t+\phi_j^0,\quad j=1,\ldots,g,
\]
where $b_j,c_j$ and $\phi_j^0$ are arbitrary constants. The coefficients $C_{k,l}:=e^{2\pi i\tilde\Omega_{k,l}}$ are
\[
C_{k,l}=\frac{(b_k-b_l)(c_k-c_l)}{(b_k-c_l)(c_k-b_l)}.
\]
For the regular real solitons, we assume all the parameters $\{b_j, c_j, \phi_j^0\}$ be real with $b_j<c_j$ and $C_{k,l}>0$, which leads to the following conditions,
\begin{itemize}
\item[(a)] $b_k<b_l<c_l<c_k$, or
\item[(b)] $b_k<c_k<b_l<c_l$, 
\end{itemize}
(the case (a) is called $P$-type, and the case (b) is called $O$-type in \cite{K:17}). In terms of the matrix $A\in \text{Gr}(g,2g)$,
the $b_j$ corresponds to a pivot position ($b_1<\cdots<b_g$) and the $c_j$ to a non-pivot position. For example, $A_{(a)}$ for the case (a) and $A_{(b)}$ for the case (b) in $\text{Gr}(2,4)$ are given by
\[
A_{(a)}=\begin{pmatrix}
1 &0 & 0 & -a\\
0&1& b& 0
\end{pmatrix}\quad\text{and}\quad
A_{(b)}=\begin{pmatrix}
1&a&0&0\\
0&0&1&b
\end{pmatrix},
\]
that is, for $A_{(a)}$, $b_1=\kappa_1, c_1=\kappa_4, b_2=\kappa_2, c_2=\kappa_3$, and for $A_{(b)}$, $b_1=\kappa_1, c_1=\kappa_2, b_2=\kappa_3, c_2=\kappa_4$. In general, the matrix $A$ consists of both (a) and (b) types. For example, we have
\[
A=\begin{pmatrix}
1&0&0&0&0&-a\\
0&1&b&0&0&0\\
0&0&0&1&c&0
\end{pmatrix}.
\]
The $\tau$-function in the form of the $M$-theta function is given as follows. First note that $|\mathcal{M}_1(A)|=3$, i.e.
there are 3 singular points, $S=\{p_1,p_2,p_3\}$, and we have
\[
\phi_1=[1,6]=\xi_6-\xi_1,\quad \phi_2=[2,3]=\xi_3-\xi_2,\quad \phi_3=[4,5]=\xi_5-\xi_4,
\]
which gives $\pi^{-1}(p_j)=\{b_j,c_j\}$, the inverse image of the normalization,
\[
b_1=\kappa_1,\quad c_1=\kappa_6,\qquad b_2=\kappa_2,\quad c_2=\kappa_3,\qquad b_3=\kappa_4,\quad c_3=\kappa_5.
\]
For the free parameters $\{a,b,c\}$ in the matrix $A$, we introduce $\phi^0_j$ as
\begin{align*}
&e^{\phi_1^0}=a\frac{(\kappa_2-\kappa_4)(\kappa_2-\kappa_6)(\kappa_4-\kappa_6)}{(\kappa_1-\kappa_2)(\kappa_1-\kappa_4)(\kappa_2-\kappa_4)},\\
&e^{\phi_2^0}=b\frac{(\kappa_1-\kappa_3)(\kappa_1-\kappa_4)(\kappa_3-\kappa_4)}{(\kappa_1-\kappa_2)(\kappa_1-\kappa_4)(\kappa_2-\kappa_4)},\\
&e^{\phi_3^0}=c\frac{(\kappa_1-\kappa_2)(\kappa_1-\kappa_5)(\kappa_2-\kappa_5)}{(\kappa_1-\kappa_2)(\kappa_1-\kappa_4)(\kappa_2-\kappa_4)},
\end{align*}
Then, the $M$-theta function is given by
\begin{align*}
\tilde\vartheta_3^{(3)}&=1+\sum_{j=1}^3e^{\tilde\phi_j}+\sum_{1\le j<k\le 3}C_{j,k}e^{\tilde\phi_j+\tilde\phi_k}+C_{1,2}C_{1,3}C_{2,3}e^{\tilde\phi_1+\tilde\phi_2+\tilde\phi_3},
\end{align*}
where $\tilde\phi_j=\phi_j+\phi_j^0$ with $C_{k,l}$ given by \eqref{eq:Omega}.

In general, the soliton solution of Hirota type consists of $g$ line solitons, and if $b_k+c_k\ne b_l+c_l$, the corresponding solitons $\phi_k=[b_k,c_k]$ and $\phi_l=[b_l,c_l]$ intersect with phase shift computed by $C_{k,l}$. Depending of the types (a) or (b) above, we have $C_{k,l}<1$ (negative phase shift) for (a), and $C_{k,l}>1$ (positive phase shift) for (b). 

\begin{remark}\label{rem:gas}
We note that the dual graph of the soliton graph (defined in \cite{K:17}) gives a rhombus tiling with $\binom{g}{2}$ rhombuses, in which each rhombus is labeled by the intersecting two solitons $\phi_k$ and $\phi_l$. Then coloring each rhombus according to the types either $C_{k,l}>1$ or $C_{k,l}<1$, we have a colored rhombus tiling of $2g$-gon.
The total number of different colored tilings is given by a Catalan number \cite{S:97, CK:08},
\[
C_g=\frac{1}{g+1}\binom{2g}{g}.
\]
It is then interesting to notice that a KP soliton with a set of random $C_{k,l}$'s (corresponding to the positive and negative phase shifts) may provide a two-dimensional extension of the KdV soliton gas (see \cite{El:23} and the references therein).
\end{remark}

\subsection{A degenerate example}
Consider an $2\times 4$ matrix $A$ in the positive $\text{Gr}(2,4)$, which is given by
\[
A=\begin{pmatrix}
1 & 0 & -a & -b\\
0& 1 & c & d
\end{pmatrix}
\]
where $a, b, c, d$ are positive constants with $cb-ad>0$.  We introduce the new variables,
\[
\phi_1:=[1,3]=\xi_3-\xi_1,\quad \phi_2:=[1,4]=\xi_4-\xi_1,\quad \phi_3:=[2,3]=\xi_3-\xi_2,\quad \phi_4:=[2,4]=\xi_4-\xi_2.
\]
This implies that the pairs of points $\pi^{-1}(p_j)=\{b_j,c_j\}$ of the normalization for $\tilde g=4$ are given by
\begin{equation}\label{eq:Gr24bc}
\begin{matrix} 
b_1=\kappa_1,\quad & b_2=\kappa_1,\quad & b_3=\kappa_2,\quad & b_4=\kappa_2,\\
c_1=\kappa_3, & c_2=\kappa_4, &\kappa_3=\kappa_3, & c_4=\kappa_4.
\end{matrix}
\end{equation}
Note here that some of the points are also identified, and it leads to further degenerations of $\Omega_{k,l}\to +i\infty$ (see below).
Also note that these phases are linearly dependent, that is,
\[
\phi_4=[2,4]=[2,3]+[3,1]+[1,4]=\phi_3+\phi_2-\phi_1,
\]
which implies that $\text{dim}\,P_1(A)=3$.
We now introduce the phase constants $e^{\phi_j^0}$ to represent the nonzero parameters in $A$ except the pivot ones,
\begin{align*}
e^{\phi_1^0}:=&a\frac{\kappa_3-\kappa_2}{\kappa_2-\kappa_1},\quad e^{\phi_2^0}:=b\frac{\kappa_4-\kappa_2}{\kappa_2-\kappa_1},\quad
e^{\phi_3^0}:=c\frac{\kappa_3-\kappa_1}{\kappa_2-\kappa_1},\quad e^{\phi_4^0}:=d\frac{\kappa_4-\kappa_1}{\kappa_2-\kappa_1}.
\end{align*}
 The phase variables are then given by
$ 2\pi iz_k=\tilde\phi_k=\phi_k+\phi_k^0,\quad \text{for}~\, k=1,\ldots, 4$, 
and the coefficients $C_{k,l}:=e^{2\pi i\tilde\Omega_{k,l}}=\frac{(b_k-b_l)(c_k-c_l)}{(b_k-c_l)(c_k-b_l)}$ become \eqref{eq:Gr24bc},
\[
C_{1,2}=C_{1,3}=C_{2,4}=C_{3,4}=0,\quad C_{2,3}=\frac{(\kappa_1-\kappa_2)(\kappa_3-\kappa_4)}{(\kappa_1-\kappa_3)(\kappa_2-\kappa_4)},\quad
C_{1,4}=-\frac{(\kappa_1-\kappa_2)(\kappa_3-\kappa_4)}{(\kappa_1-\kappa_4)(\kappa_2-\kappa_3)}.
\]
Then, the $\tau$-function for the matrix $A$ can be expressed in the form of $\tilde\vartheta_4^{(4)}$, i.e. $\tau_A=E_{1,2}\tilde\vartheta_4^{(4)}$ with
\begin{align}\label{eq:tauGr24A}
\tilde\vartheta_4^{(4)}=&1+e^{\tilde\phi_1}+e^{\tilde\phi_2}+e^{\tilde\phi_3}+e^{\tilde\phi_4}+\\
&+\frac{(\kappa_1-\kappa_2)(\kappa_3-\kappa_4)}{(\kappa_1-\kappa_3)(\kappa_2-\kappa_4)}e^{\tilde\phi_2+\tilde\phi_3}
-\frac{(\kappa_1-\kappa_2)(\kappa_3-\kappa_4)}{(\kappa_1-\kappa_4)(\kappa_2-\kappa_3)}e^{\tilde\phi_1+\tilde\phi_4}.\nonumber
\end{align}
where $E_{1,2}=(\kappa_2-\kappa_1)e^{\xi_1+\xi_2}$.
Here the additional degenerations in $\Omega_{k,l}$ are
\[
\tilde\Omega_{1,2}=\tilde\Omega_{1,3}=\tilde\Omega_{2,4}=\tilde\Omega_{3,4}=+i\infty.
\]
This implies that the interactions between these waves give an infinite phase shift, which represents a resonant interaction.
Note here that $C_{1.4}=e^{2\pi i\tilde\Omega_{1,4}}<0$, i.e. the real part of $\tilde\Omega_{1,4}$ is $\text{Re}(\tilde\Omega_{1,4})=\frac{1}{2}$. Then the positivity of $\Delta_{3,4}(A)=cb-ad$ is due to the values of $\phi^0_1+\phi_4^0$
and $\phi^0_2+\phi^0_3$.

\begin{remark}
The dimension of $P_1(A)$ represents of the number of independent phases. In this example of $\text{Gr}(2,4)$, the dimension is three, but the KP soliton should be embedded into $\tilde g=4$ $M$-theta function. This implies that the set of singular points on the curve is $S=\{p_1,\ldots,p_4\}$. We will discuss
the details of the singular structures of the curve associated to KP solitons in a future communication.
\end{remark}

\begin{remark}
As is also mentioned in \cite{AFMS:23}, the KP solitons can be considered as tropical limits of the theta function, and
the tropical Riemann period matrix defines a Delaunary polytope (dual graph is the soliton graph \cite{K:18}). The dimension of the polytope should be considered as dim$P_1(A)$, not $\tilde g=|\mathcal{M}_1(A)|$. For the example above of $\text{Gr}(2,4)$, we have $\text{dim}P_1(A)=3$, and the corresponding polytope is octahedron. In \cite{AFMS:23}, it is shown that 
there are total of 5 distinct polytopes such as tetrahedra, octahedra, square pyramid, triangular prism and cube. They are easily
identified as the (dual) soliton graphs in $(x,y,t)$ space. The higher dimensional examples are obtained by the KP hierarchy (see e.g. \cite{KW:15}). We will discuss the details in a future communication.
\end{remark}

%%%%%%%%%%%%%%%%%%%%%%%%%%%%%%%%%
\section{KP solitons on a quasi-periodic background}\label{sec:SolitonQ}
In the limit \eqref{eq:limits}, if we consider only for some $\Omega_{j,j}\to+i\infty$, we get solitons on a quasi-periodic background. Let us consider the limit with $\Omega_{i,i}\to+i\infty$ for $1\le i\le n<g$.
Then having the shift $z_i\to z_i-\frac{1}{2}\Omega_{i.i}$ for $1\le i\le n$, we have
\begin{align}\label{eq:Ptheta}
\vartheta_g(\z;\Omega)~\to~\,~\tilde\vartheta^{(n)}_g(\z;\tilde\Omega)&:=
\vartheta_{g-n}(\z^{(n)};\tilde\Omega^{(n)})+
\sum_{j=1}^ne^{2\pi iz_j}\vartheta_{g-n}(\z^{(n)}+\tilde\Omega_j\,;\,\tilde\Omega^{(n)})+\\\nonumber
& +\sum_{1\le j\le k\le n}e^{2\pi i(\tilde\Omega_{j,k}+z_j+z_k)}\vartheta_{g-n}(\z^{(n)}+\tilde\Omega_j+\tilde\Omega_k\,;\,\tilde\Omega^{(n)})+\cdots\\\nonumber
&\quad\cdots+e^{2\pi i(\sum_{1\le j<k\le n}\tilde\Omega_{j,k}+\sum_{j=1}^nz_j)}\vartheta_{g-n}(\z^{(n)}+\sum_{j=1}^n\tilde\Omega_j\,;\,\tilde\Omega^{(n)}),
\end{align}
where we have defined 
\[
\z^{(n)}=(z_{n+1},z_{n+2},\cdots,z_g),\qquad \z^{(n)}+\tilde\Omega_j=(z_{n+1}+\tilde\Omega_{j,n+1},\ldots, z_g+\tilde\Omega_{j,g}).
\]
and $\tilde\Omega^{(n)}$ is the period matrix given by the bottom $(g-n)\times(g-n)$ part of the period matrix $\tilde\Omega$,
i.e. $\tilde\Omega^{(n)}=(\tilde\Omega_{k,l})_{n+1\le k,l\le g}$. The $\vartheta_{g-n}(\z^{(n)};\tilde\Omega^{(n)})$ is the theta
function of genus $g-n$ with the period matrix $\tilde\Omega^{(n)}$.
One can express the function $\tilde\vartheta^{(n)}_g$ using the vertex operator \eqref{eq:VO},
\begin{equation}\label{eq:VOPeriodic}
\tilde\vartheta^{(n)}_g(\z;\tilde\Omega)=\prod_{j=1}^n(1+\hat V_j[\tilde\Omega])\cdot\vartheta_{g-n}(\z^{(n)};\tilde\Omega^{(n)}).
\end{equation}
(Similar formula has been presented in Nakayashiki's lecture from different point of view \cite{N:23}.)
From Lemma \ref{lem:VO}, we see that the coefficients of the $\vartheta_{g-n}(\z^{(n)};\tilde\Omega^{(n)})$ depends only on the variables
$z_j$ for $1\le j\le n$ and the matrix $\tilde\Omega_{i,j}$ for $1\le i<j\le n$. Then considering a matrix $A\in\text{Gr}(N,M)$ with
$|\mathcal{M}_1(A)|=n$, one can construct $\tilde\Omega=(\tilde\Omega_{i,j})_{1\le i,j\le n}$ for a KP soliton following Theorem \ref{thm:main}. We then expect to have the KP soliton on quasi-periodic background given by the theta function $\vartheta_{g-n}(z^{(n)};\tilde\Omega)$.

The wave pattern of the KP soliton has a web-like structure, which divides several connected regions of the plane (see e.g. \cite{K:17}). Each connected region represents the dominant exponential in the $\tau$-function. Then the solution generated by \eqref{eq:VOPeriodic} is expected to have the similar structure of the KP soliton where each connected region presents a part of  quasi-periodic solution generated by $\vartheta_{g-n}$. And each soliton provides the phase shift to the adjacent quasi-periodic solutions. For example, consider the case with $g=2$ and $n=1$, and take the limit $\Omega_{1,1}\to+i\infty$. We have
\[
\vartheta_2^{(1)}(z_1,z_2,\tilde\Omega)=(1+\hat V_1)\vartheta_1(z_2;\tilde\Omega_{2,2})=\vartheta_1(z_2;\tilde\Omega_{2,2})+e^{2\pi iz_1}\vartheta_1(z_2+\tilde\Omega_{1,2};\tilde\Omega_{2,2}).
\]
This implies that in the region where $2\pi i z_1=\phi_1\ll0$, we have the periodic solution generated by
$\tau(x,y,t)=\vartheta_1(ux+vy+wt;\tilde\Omega_{2,2})$ (i.e. we take $z_2=ux+vy+wt$), and in the region where $2\pi iz_1=\tilde\phi_1\gg 0$,
we have the same periodic solution with the phase shift given by $2\pi i\tilde\Omega_{1,2}$. That is, one KP soliton interacts with the periodic background with phase shift. In his study of degeneration of hyperelliptic curve \cite{N:20}, Nakayashiki has obtained a similar formula \cite{N2:23}. Also in \cite{HMP:23, BJT:22}, the authors obtained a similar result for the KdV case.
We will discuss the general structure of the KP solitons on quasi-periodic background in a future communication.

\bigskip
\noindent
{\bf Acknowledgements.}
The author would like to thank Atsushi Nakayashiki for valuable discussions, and also received 
 benefits from the Nakayashiki's series of lectures \cite{N:23} (see also \cite{N2:23}).  He would also like to thank Atsushi Nakayashiki for informing him the references \cite{I:22, Ka:11}, and Sarbarish Chakravarty for critical reading of the manuscript. He appreciates a research fund from Shandong University of Science and Technology. He would also like to thank the referees for their useful and helpful suggestions.
 
 \bigskip
 \noindent
 {\bf Declarations.}
 \begin{itemize}
\item Data sharing not applicable to this article as no datasets were generated or analyzed during the current study.\\
 \item The author declares no conflicts of interest associated with this manuscript.
 \end{itemize}

%%%%%%%%%%%%%%%%%%%%%%%%%%%%%%%%%%%%%%%%%%%
\raggedright
%\addcontentsline{toc}{section}{References}

%\bibliographystyle{amsalpha}
%\bibliography{bibliography}
%\label{sec:biblio}

\end{document}